\documentclass[journal,peerreview,12pt]{IEEEtran}
\usepackage{amsmath,amsfonts,amssymb,amsthm}
\usepackage{xurl}
\usepackage{graphicx}
\usepackage{cite}
\usepackage{bbm}
\usepackage{physics}
\usepackage{mleftright}
\usepackage{mathtools}
\usepackage{algpseudocode}
\usepackage{algorithm}
\usepackage{hyperref}
\usepackage{placeins}

\newtheorem{theorem}{Theorem}

\newtheorem{proposition}[theorem]{Proposition}

\newtheorem{remark}{Remark}
\newtheorem*{theorem*}{Theorem}
\newtheorem*{corollary*}{Corollary}
\newtheorem*{proposition*}{Proposition}
\newtheorem*{remark*}{Remark}

\theoremstyle{definition}
\newtheorem{example}{Example}
\newtheorem{definition}{Definition}
\newtheorem*{definition*}{Definition}

\newcommand*\BitwiseAND{\mathbin{\&}}
\newcommand*\LogicalAND{\mathbin{\&\&}}
\newcommand*\LogicalOR{\mathbin{||}}
\newcommand*\BitwiseOR{\mathbin{|}}
\newcommand*\ShiftLeft{\ll}
\newcommand*\ShiftRight{\gg}
\newcommand*\LogicalNOT{!}

\makeatletter
\NewCommandCopy\@@pmod\pmod
\DeclareRobustCommand{\pmod}{\@ifstar\@pmods\@@pmod}
\def\@pmods#1{\mkern4mu({\operator@font mod}\mkern 6mu#1)}
\makeatother

\allowdisplaybreaks
\begin{document}
	
\title{Higher-Order Staircase Codes}

\author{Mohannad~Shehadeh,~\IEEEmembership{Graduate Student Member,~IEEE},
	Frank~R.~Kschischang,~\IEEEmembership{Fellow,~IEEE},
	Alvin~Y.~Sukmadji,~\IEEEmembership{Graduate Student Member,~IEEE},
	and William~Kingsford
	\thanks{Submitted to \emph{IEEE Transactions on Information Theory} on May.~30, 2024.
			Parts of this work \cite{mohannad-ofc} were presented at
			the 2024 Optical Fiber Communication Conference.}
	\thanks{Mohannad~Shehadeh, Frank~R.~Kschischang, and Alvin~Y.~Sukmadji are with the Edward S. Rogers Sr. Department of Electrical \& Computer Engineering, University of Toronto, Toronto, ON M5S 3G4, Canada (emails: \{mshehadeh, frank, asukmadji\}@ece.utoronto.ca).}
	\thanks{William~Kingsford is an unaffiliated scholar from Palmerston North, New Zealand (email: will.kingsford@gmail.com).}
}

\maketitle

\begin{abstract}
	We generalize
	staircase codes 
	and tiled diagonal zipper codes,
	preserving their key properties while allowing each
	coded symbol to be protected by arbitrarily 
	many component codewords 
	rather than only two. 
	This generalization which we term ``higher-order staircase codes'' 
	arises from the marriage 
	of two distinct combinatorial objects:
	difference triangle sets and finite-geometric nets, 
	which have typically been applied separately 
	to code design. We demonstrate 
	one possible realization of these codes, obtaining 
	powerful, high-rate, low-error-floor, and low-complexity
	coding schemes based on simple iterative syndrome-domain
	decoding of coupled Hamming component codes. 
	We anticipate that the proposed codes
	could improve performance--complexity--latency tradeoffs 
	in high-throughput communications
	applications, most notably fiber-optic, 
	in which classical staircase codes and 
	zipper codes have been applied. We
        consider the construction of
	difference triangle sets having minimum scope and sum-of-lengths, which lead
	to memory-optimal realizations of
	higher-order staircase codes. 
	These results also enable memory reductions for 
	early families of convolutional codes 
	constructed from difference triangle sets.
\end{abstract}

\begin{IEEEkeywords}
	Staircase codes, product codes, finite-geometric nets,
	difference triangle sets,
	iterative algebraic decoding.
\end{IEEEkeywords}

\section{Overview}
\IEEEPARstart{S}{taircase} codes \cite{smith} 
and related product-like codes \cite{sukmadji}
have provided a highly-competitive paradigm
for energy-efficient, 
high-rate, high-throughput, and 
high-reliability, error control coding.
Such codes use syndrome-domain iterative
algebraic or 
bounded distance decoding of 
algebraic component
codes to enable high decoder throughputs,
while maintaining relatively low internal decoder data-flow. 
Furthermore, their 
block-convolutional 
structures
translate to natural 
pipelining and parallelism, the combination 
of which is essential for efficient hardware designs.
Moreover, their typically 
combinatorial constructions 
ensure control over error floors which 
must usually be below $10^{-15}$ 
in the relevant applications. 
Most notably, such codes are 
used in fiber-optic communications \cite[Ch.~7]{optical-networks-handbook},
but are also relevant to flash storage 
applications \cite{NAND-staircase} 
which increasingly demand 
similar code characteristics.

In this paper, we provide a
fruitful generalization of staircase codes 
which arises
from the combination of 
\emph{difference triangle sets} (DTSs) \cite[Part~VI:~Ch.~19]{Colbourn} 
and finite-geometric \emph{nets} 
\cite[Part~III:~Ch.~3]{Colbourn}.
Both DTSs and nets are well-studied 
combinatorial objects
and both have been separately
applied to code design, e.g., in \cite{CSOC,NB-DTS} and 
\cite{SJ-PG,TD-LDPC} respectively. In these
examples, DTSs lead naturally to convolutional
codes while nets lead naturally to block codes.
It is unsurprising then that their combination
leads to a generalization of staircase
codes which themselves combine aspects
of block and convolutional coding.

We will proceed
to summarize at a high level
our key construction.
A \emph{higher-order staircase code} is determined
by three intertwined objects:  a DTS, a net,
and a \emph{component code}.  Each of these
objects is characterized by two parameters,
the meanings of which will be defined later.
The relationships between these parameters
and properties of the resulting code are
as follows:
Given an $(L,M)$-DTS, an $(M+1,S/L)$-net,
and a component code of length $(M+1)S$
and dimension $(M+1)S-r$, one obtains
a rate $1 - r/S$
spatially-coupled code 
in which every symbol
is protected by $M+1$ component
codewords and any pair of distinct 
component codewords share at most one symbol. 
Alternatively, 
the Tanner graph of the code 
is an infinite semiregular 
$4$-cycle-free bipartite graph 
with variable
node degree $M+1$ and generalized
constraint node degree $(M+1)S$.
The parameter $L$ controls the balance of
memory depth against parallelism 
while keeping all 
of the aforementioned code properties 
completely unchanged. A fourth parameter $C$
controls the parallelism obtained by circularly coupling multiple 
copies of a higher-order staircase code together. The parameters
$M$ and $C$ provide mechanisms for improving error floor 
and threshold performance respectively without altering
the component code. 

The proposed code family recovers
as particular cases some existing code families
and interpolates between them, thus providing a unified
generalization:
\begin{itemize}
	\item When $L = M = C = 1$, 
	the classical staircase codes of
	Smith et al.\ \cite{smith} are recovered.
	\item When $L \geq 1$ and $M = C = 1$, the
	tiled diagonal zipper codes 
	of Sukmadji et al.\ \cite{sukmadji-cwit} are recovered.
	\item When $L \geq 1$, $M = 1$, and $C = 2$, the OFEC code family \cite{OFEC} is recovered.
	\item When $L \geq 1$, $M = 1$, and $C \geq 2$, the multiply-chained tiled diagonal zipper codes of 
	Zhao et al.\ \cite{MC-TDZC} are recovered.
	\item When $S/L = M = C = 1$, the continuously interleaved codes of
	Coe \cite{CI-patent} are recovered.
	\item When $S/L = C = r = 1$, a
	recursive (rather than feedforward)
	version of the self-orthogonal convolutional
	codes of Robinson and Bernstein \cite{CSOC} is recovered.
\end{itemize}
When $M > 1$ and $L = 1$, we obtain
a generalization of staircase codes with 
higher variable degree which was included in a
conference version of this work \cite{mohannad-ofc}.
A similar construction is included in the recently-issued
US patent \cite{Geyer-patent}, which recognizes the role of 
$(1,M)$-DTSs in such a generalization of staircase
codes, but makes no explicit use of finite-geometric nets
or general $(L,M)$-DTSs.
Importantly, when $L > 1$ and $M > 1$, 
our construction becomes a 
higher-variable-degree generalization
of tiled diagonal zipper codes \cite{sukmadji-cwit}
which reduces encoding and decoding memory 
while also providing
flexibility in aspects of hardware implementation.

Constructing memory-optimal higher-order staircase codes
reduces to a natural extension of the classically studied 
\emph{minimum scope DTS problem} \cite[Part~VI:~Ch.~19]{Colbourn}
in which we additionally seek to minimize 
a new objective:  \emph{sum-of-lengths}.
We study this problem, exhibiting new DTSs 
that also have implications for earlier 
work on classical convolutional codes constructed from DTSs.

The proposed codes, 
like their predecessors in \cite{smith,sukmadji},
combine aspects of convolutional codes, 
low-density parity-check (LDPC) codes, and algebraic codes.
They can be described as an instance of
zipper codes \cite{sukmadji}, spatially-coupled 
generalized LDPC (SC-GLDPC) codes \cite{SC-GLDPC},
or even simply as (highly complex) convolutional codes. 
However, the proposed codes are 
very highly structured relative to what is
typically admitted under these frameworks. 
This means that the simplest possible description 
will be one which is self-contained and does 
not explicitly appeal to them.
Without discounting the potential significance of 
these connections, 
we will limit our goal in this paper to providing
such a description. 

The key advantages of the code structure considered is that it simultaneously enables a large degree of parallelism due to a block-based construction, 
high performance due to spatial coupling, and control over error floors due to the combinatorial construction. The recovery of established high-throughput coding techniques 
\cite{smith, sukmadji-cwit, OFEC, CI-patent, MC-TDZC} as special cases is 
then unsurprising since these properties are precisely what is demanded by such applications.

The remainder of this paper is
organized as follows: We begin 
in Section \ref{gsc-section}
with a special case of our construction
corresponding to $L = 1$ along with
key tools and definitions. In 
Section \ref{instantiation-section},
we illustrate one possible realization of the
construction of Section \ref{gsc-section}, discussing
various practical aspects and a high-throughput simulation
technique. Section \ref{instantiation-section} 
also proposes techniques for
efficiently decoding systematic Hamming codes
without lookup tables (LUTs). Following this,
we provide the general, complete 
construction of higher-order staircase codes 
corresponding to $L \geq 1$ in Section 
\ref{Generalized-TDZC}. We follow this
with some simulation-based demonstrations 
in Section \ref{hosc-sim-section}. 
In Section \ref{DTS-section},
we study the problem of constructing 
DTSs that correspond to memory-optimal instances
of higher-order staircase codes. A variety of such DTSs are exhibited in the Appendix.
Finally, we describe our DTS search technique in 
Section \ref{computer-search-technique} 
before concluding in Section \ref{Conclusion} 
with suggestions for future work.

\section{Generalized Staircase Codes with Arbitrary Bit Degree ($L=1$, $M\geq 1$)}\label{gsc-section}

In this section, we provide a generalization
of staircase codes \cite{smith} in which 
each coded bit (or nonbinary symbol) is
protected by arbitrarily many component codewords
rather than only two. The goal is to do this
while preserving the property of classical staircase codes,
essential to error floor performance, 
that distinct component codewords intersect in at most one bit. 
This construction
will arise from the combination of
a $(1,M)$-DTS and an $(M+1,S)$-net. 
A $(1,M)$-DTS is equivalent to a \emph{Golomb ruler}
of order $M+1$, to be defined shortly, with the
definition of a general $(L,M)$-DTS deferred 
to Section \ref{Generalized-TDZC} where
it is needed. Classical staircase codes \cite{smith} will
be defined as a special case of generalized 
staircase codes in Section \ref{classical-staircase-recovery}.

\subsection{Code Structure}

As with classical staircase codes \cite{smith}, 
a codeword is an 
infinite sequence $B_0, B_1, B_2, \dots$ of $S\times S$
matrices called \emph{blocks} satisfying certain constraints.  The parameter $S$ is referred to as 
the \emph{sidelength}. 
Matrix entries are from the binary field $\mathbb{F}_2$ for concreteness 
but nonbinary alphabets are straightforwardly accommodated. 
We assume zero-based indexing of matrices with the $(i,j)$th 
entry of a matrix $B$ denoted by $(B)_{(i,j)}$
and the index set $\{0,1,\dots,S-1\}$ denoted by $[S]$.

We begin by specifying a collection of $M+1$ permutations 
\begin{align*}
	\pi_k\colon [S] \times [S] &\longrightarrow [S] \times [S]\\
	(i,j)&\longmapsto \pi_k(i,j)
\end{align*}
indexed by $k \in [M+1]$ where $M$ is termed the \emph{degree parameter}.
In other words, for each ${k \in [M+1]}$, we have an invertible function $\pi_k$ 
which maps any given
$(i,j) \in [S] \times [S]$ to $\pi_k(i,j) \in [S] \times [S]$.
We further define 
\begin{align*}
	\Pi_k\colon \mathbb{F}_2^{S\times S} &\longrightarrow \mathbb{F}_2^{S\times S}\\
	B&\longmapsto \Pi_k(B)
\end{align*} 
to be the invertible function mapping a matrix $B \in \mathbb{F}_2^{S\times S}$ to 
its permuted copy $\Pi_k(B) \in \mathbb{F}_2^{S\times S}$ according to $\pi_k$. In 
particular, the entries of $\Pi_k(B)$ are given by
\begin{equation*}
	(\Pi_k(B))_{(i,j)} = (B)_{\pi_k(i,j)}
\end{equation*}
for all $(i,j) \in [S] \times [S]$.
We will assume that $\Pi_0(B) = B$ or, 
equivalently, that $\pi_0(i,j) = (i,j)$, unless otherwise
stated. This is typically without
loss of generality.

Next, we define \emph{rulers}:
\begin{definition}[Ruler]
	A \emph{ruler} of \emph{order} $M+1$ 
	is an ordered set of $M + 1$ \emph{distinct}
	integers $(d_0,d_1,\dots,d_M)$ referred to as
	\emph{marks}.
	A ruler is \emph{normalized} if
	$0 = d_0 < d_1 < \dots < d_M$.
\end{definition}
We will
assume that all rulers are normalized 
unless otherwise stated. Normalization 
is achieved by sorting followed by 
subtraction of the least element from
each mark.

Finally, we specify a linear, systematic $t$-error-correcting \emph{component code} 
$\mathcal{C} \subseteq \mathbb{F}_2^{(M+1)S}$ of length $(M+1)S$
and dimension $(M+1)S-r$ thus having $r$ parity bits. 
This allows us to provide the following 
definition for a \emph{weak generalized staircase 
code}. The qualifier \emph{weak} will be dropped
once we impose certain conditions on the 
ruler and permutations.

\begin{definition}[Weak generalized staircase code]\label{w-gsc-definition}
	Given all three of: a collection of
	permutations, a ruler, and a component code,
	each as just described, 
	a \emph{weak generalized staircase code} is 
	defined by the constraint that the rows of the  
	$S\times (M+1)S$ matrix 
	\begin{equation}\label{constraint-span}
		\big(
		\Pi_M(B_{n-d_M}) \big\vert
		\Pi_{M-1}(B_{n-d_{M-1}}) \big\vert
		\cdots \big\vert
		\Pi_{1}(B_{n-d_1}) \big\vert
		B_n 
		\big)
	\end{equation}
	belong to $\mathcal{C}$ for all integers $n \geq d_M$ 
	and the initialization condition 
	\begin{equation}\label{init-condition}
		B_0 = B_1 = \cdots = B_{d_M - 1} = 0_{S\times S}
	\end{equation}
	where $0_{S\times S}$ is the all-zero matrix.
\end{definition}

Practically, the first $d_M$ blocks \eqref{init-condition} 
need not be transmitted since they are fixed and 
encoding of the $n$th block is performed recursively
starting at $n = d_M$. In particular,
the $n$th
$S\times(S-r)$ block of information bits 
$B_n^{\mathsf{info}}$
produced by a streaming source is
adjoined to $M$ permuted previously encoded blocks
to form the $S \times ((M+1)S-r)$ matrix
\begin{equation}\label{encoder-span}
	\big(
	\Pi_M(B_{n-d_M}) \big\vert
	\Pi_{M-1}(B_{n-d_{M-1}}) \big\vert
	\cdots \big\vert
	\Pi_{1}(B_{n-d_1}) \big\vert
	B_n^{\mathsf{info}} 
	\big)\hspace{-0.81ex}
\end{equation}
whose rows are encoded using 
systematic encoders for $\mathcal{C}$ to produce $r$ parity
bits for each of the $S$ rows.
These parity bits form an $S\times r$ 
block $B_n^{\mathsf{parity}}$
which is adjoined to the information
block to produce $B_n$ as
\begin{equation*}
	B_n = 
	\big(
		\underbrace{\text{---}\; B_n^{\mathsf{info}}\; \text{---}}_{S \times (S-r)}
		\;\big\vert\;
		\underbrace{\text{--}\; B_n^{\mathsf{parity}}\; \text{--}}_{S \times r}
	\big)
\end{equation*}
which satisfies the constraint
that the rows of \eqref{constraint-span} belong to $\mathcal{C}$.  
Therefore, each transmitted block 
$B_n$ contains $S-r$ information columns
and $r$ parity columns yielding a rate of 
\begin{equation*}
	R_\mathsf{unterminated} = 1-\frac{r}{S}
\end{equation*}
which we qualify as ``unterminated'' so that we can later
account for rate losses due to the strategy used for 
ending a transmission. We further observe that we need $r \leq S$ 
or, equivalently, that the rate of $\mathcal{C}$ is at least $M/(M+1)$.

\begin{figure}[t]
	\centering
	\includegraphics[]{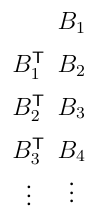}
	\caption{Visualization of a (classical) staircase code \cite{smith}
		which corresponds to $M = 1$, 
		$d_1=1$, and $\Pi_1(B)=B^\mathsf{T}$ (the matrix
		transpose);
		rows belong to $\mathcal{C}$.}\label{sc-example-fig}
\end{figure}
\begin{figure}[t]
	\centering
	\includegraphics[]{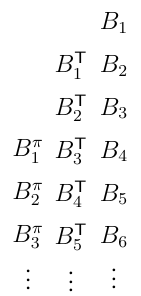}
	\caption{Visualization of a (weak) generalized
		staircase code which corresponds to $M = 2$,
		$(d_1,d_2)=(1,3)$, $\Pi_1(B)=B^\mathsf{T}$ (the matrix
		transpose), 
		and $\Pi_2(B)=B^\pi$ (some permutation);
		rows belong to $\mathcal{C}$.}\label{gsc-example-fig}
\end{figure}

We provide a visualization of the resulting codes in 
Fig.~\ref{sc-example-fig} and Fig.~\ref{gsc-example-fig}
for the cases of $M = 1$ and $M = 2$ respectively 
in the
style of \emph{zipper codes} \cite{sukmadji}. In
these pictures, the rightmost column represents the 
blocks that are actually transmitted while the columns
to its left are delayed and permuted copies that are
used to constrain the transmitted blocks.

It remains to specify precisely what the permutations
and ruler should be. Before proceeding, we provide
a graph perspective which formalizes this question.
This can be skimmed on a first reading.

\subsection{Graph Perspective and Scattering}

For mathematical convenience, especially in the later parts of this paper, 
we extend codewords to be
bi-infinite sequences 
\begin{equation*}
	\dots,B_{-2},B_{-1},B_{0},B_{1},B_{2},\dots
\end{equation*} 
where we define 
\begin{equation*}
	\cdots = B_{-3} = B_{-2} = B_{-1} = 0_{S\times S}
\end{equation*}
so that the constraint on the rows of \eqref{constraint-span}
in a weak generalized staircase code can be presumed to hold for
all integers $n \in \mathbb{Z}$ rather than only $n \geq d_M$.
Practically, nothing changes since the all-zero initialization
blocks are not transmitted and only a finite number of past blocks
are ever needed for encoding.

We define an infinite bipartite graph whose disjoint and independent 
vertex sets are $\mathcal{P} = \mathbb{Z} \times [S] \times [S]$
and $\mathcal{L} = \mathbb{Z} \times [S]$. 
An element $(n,i,j)$ 
of $\mathcal{P}$ corresponds to the 
$(i,j)$th variable in the $n$th block $(B_n)_{(i,j)}$ while
an element $(n',i')$ of $\mathcal{L}$ corresponds to the
component codeword constraint on the $i'$th row of \eqref{constraint-span}
with $n = n'$. In particular, the edge set 
$\mathcal{I}\subseteq \mathcal{P}\times \mathcal{L}$
is given by
\[
	\mathcal{I} = \big\{
	((n,i,j),(n',i')) \in \mathcal{P}\times \mathcal{L} \;\big|\; \\	
	n = n'-d_k, k \in [M+1],
	(i,j) = \pi_k(i',j'), j'\in [S]
	\big\}\text{.}
\]
In other words, this is the (generalized)
\emph{Tanner graph} \cite{Tanner} 
of the code with $\mathcal{P}$ being the variable
nodes and $\mathcal{L}$ being generalized constraint nodes
corresponding to membership in the component code $\mathcal{C}$. 
The variable nodes partition naturally into \emph{blocks} 
where
the $n$th block comprises $S^2$ variable nodes corresponding 
to $(n,i,j)\in \mathcal{P}$ for every $i,j \in [S]$. Moreover, 
the
constraint nodes partition naturally into \emph{constraint
spans} where the $n'$th constraint span corresponds
to the $S$ rows of \eqref{constraint-span} with $n=n'$, i.e., contains $S$ constraint
nodes corresponding to $(n',i') \in \mathcal{L}$ for every $i'\in [S]$.

We denote by $\mathcal{N}(\mathbf{p})$ the neighbourhood
of a given $\mathbf{p}\in\mathcal{P}$ and similarly by 
$\mathcal{N}(\mathbf{q})$ the neighbourhood of a given
$\mathbf{q}\in \mathcal{L}$. 
Observe that we have for all $(n',i') \in \mathcal{L}$ 
\begin{equation}
\label{component-codeword}
	\mathcal{N}((n',i')) =
	\{(n,i,j) \in \mathcal{P} \mid 
	n = n'-d_k, k\in[M+1], \\
	 (i,j) = \pi_k(i',j'),  j'\in [S]
	\}
\end{equation}
with 
\begin{equation*}
	\abs{\mathcal{N}((n',i'))} = (M+1)S\text{,}
\end{equation*}
i.e., that each constraint involves $(M+1)S$
variables which are a row of \eqref{constraint-span}.
Moreover, for all $(n,i,j)\in \mathcal{P}$, we have
\begin{equation}
\label{bit-neighbours}
	\mathcal{N}((n,i,j)) =
	\{(n',i') \in \mathcal{L} \mid 
	n' = n + d_k, k \in [M+1], \\
	(i',j')=\pi_k^{-1}(i,j), j'\in [S]
	\}
\end{equation}
with 
\begin{equation}\label{bit-degree}
	\abs{\mathcal{N}((n,i,j))} = (M+1)\text{,}
\end{equation}
i.e., that each variable is involved in
$M+1$ constraints.

While each coded 
symbol is protected by $M+1$ component codewords
as per \eqref{bit-degree}, 
we have no guarantee that our error correcting
capability improves with increasing 
$M$ unless there is some amount of disjointness
between the component codewords.
One way to guarantee this is by imposing 
the property that
our Tanner graph is \emph{$4$-cycle-free}.  
The same property is referred
to as \emph{scattering} in the zipper code 
framework \cite{sukmadji} and as
\emph{self-orthogonality} in the context of threshold decoding 
\cite{CSOC}.

\begin{definition}[Scattering/$4$-cycle-freeness/self-orthogonality]
	A code with Tanner graph $(\mathcal{P},\mathcal{L},\mathcal{I})$
	is \emph{scattering} if, for all
	$\mathbf{q}_1, \mathbf{q}_2\in\mathcal{L}$
	with $\mathbf{q}_1\neq\mathbf{q}_2$, we have
	\begin{equation}\label{scattering}
		\abs{\mathcal{N}(\mathbf{q}_1)\cap \mathcal{N}(\mathbf{q}_2)}
		\leq 1\text{,}
	\end{equation}
	i.e., that pairs of distinct (generalized) constraints share
	at most one variable.
\end{definition}
  
It is well-known that $4$-cycle-freeness
of the Tanner graph of a code has performance
implications the specifics of which depend on the
channel or error model under consideration
as well as the decoding method. In our context, 
the relevant consequence is as follows:

\begin{proposition}\label{stall-pattern-bound}
	For any code with $4$-cycle-free Tanner
	graph $(\mathcal{P},\mathcal{L},\mathcal{I})$ having  
	degree $M+1$ variable nodes $\mathcal{P}$ along with
	generalized constraint nodes $\mathcal{L}$ corresponding to 
	membership in a $t$-error-correcting component code,
	the Hamming weight of an uncorrectable 
	error under iterative
	bounded distance decoding of the constraints is at least $(M+1)t+1$.
\end{proposition}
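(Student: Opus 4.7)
The plan is to interpret an uncorrectable error pattern combinatorially and then use the scattering property via a simple double-counting argument. Let $E \subseteq \mathcal{P}$ denote the set of erroneous variable positions. Under iterative bounded distance decoding, a component constraint $\mathbf{q} \in \mathcal{L}$ decodes successfully (correcting the errors within its support) whenever $1 \leq \abs{\mathcal{N}(\mathbf{q}) \cap E} \leq t$. Hence $E$ is uncorrectable only if every constraint is either error-free or supports strictly more than $t$ errors, i.e., for every $\mathbf{q} \in \mathcal{L}$ with $\mathcal{N}(\mathbf{q}) \cap E \neq \emptyset$, we have $\abs{\mathcal{N}(\mathbf{q}) \cap E} \geq t+1$. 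I would first establish this characterization (which is the standard definition of a stall pattern) and then reduce the proposition to a purely combinatorial lower bound on $\abs{E}$.

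Next, I would pick any $\mathbf{p} \in E$ and let $\mathbf{q}_0,\mathbf{q}_1,\dots,\mathbf{q}_M$ denote the $M+1$ constraints neighbouring $\mathbf{p}$, which exist by \eqref{bit-degree}. Each $\mathbf{q}_k$ contains $\mathbf{p}$ and therefore is active, so by the stall characterization $\abs{\mathcal{N}(\mathbf{q}_k) \cap E} \geq t+1$ for every $k \in [M+1]$. Summing over $k$ gives
\begin{equation*}
    \sum_{k=0}^{M} \abs{\mathcal{N}(\mathbf{q}_k) \cap E} \;\geq\; (M+1)(t+1).
\end{equation*}
The left side equals the number of incidences between $E$ and $\{\mathbf{q}_0,\dots,\mathbf{q}_M\}$, which can also be counted by summing, over each $\mathbf{p}' \in E$, the number of $\mathbf{q}_k$'s that contain $\mathbf{p}'$.

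The key step is to bound this incidence count from above using scattering. For any two distinct $\mathbf{q}_k, \mathbf{q}_\ell$ in our collection, the scattering property \eqref{scattering} gives $\abs{\mathcal{N}(\mathbf{q}_k) \cap \mathcal{N}(\mathbf{q}_\ell)} \leq 1$; since $\mathbf{p}$ already lies in this intersection, it must in fact be the unique common neighbour. Consequently, every $\mathbf{p}' \in E \setminus \{\mathbf{p}\}$ belongs to at most one of the $\mathbf{q}_k$'s, while $\mathbf{p}$ belongs to all $M+1$ of them. This yields
\begin{equation*}
    \sum_{k=0}^{M} \abs{\mathcal{N}(\mathbf{q}_k) \cap E} \;\leq\; (M+1) + (\abs{E} - 1) \;=\; \abs{E} + M.
\end{equation*}
Combining the two inequalities gives $\abs{E} \geq (M+1)(t+1) - M = (M+1)t + 1$, as claimed.

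The argument is short and the main conceptual obstacle is simply pinning down the right notion of ``uncorrectable'' — once the stall-pattern characterization is in place, the scattering hypothesis does the rest of the work through the double count. No subtleties involving the infinite nature of the Tanner graph or the component code structure beyond its error-correcting radius $t$ are needed.
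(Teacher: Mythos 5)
Your argument is correct and is essentially the paper's own proof: both fix a single erroneous variable, observe that each of its $M+1$ distinct neighbouring constraints must contain at least $t+1$ errors, and invoke $4$-cycle-freeness to ensure these error sets meet only in the fixed variable --- your incidence double count is just a repackaging of the paper's disjoint-contributions degree bound, yielding the same $\abs{E} \geq (M+1)t+1$. The only cosmetic difference is that the paper routes the stall-pattern reduction through a \emph{minimum-weight} uncorrectable error (so that a constraint with between $1$ and $t$ errors would let the weight be further reduced), which is the tighter way to justify your ``uncorrectable only if every active constraint holds more than $t$ errors'' step, since an arbitrary uncorrectable pattern need not itself be a stall pattern.
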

\begin{proof}
	A minimum weight uncorrectable error corresponds to
	a finite subset of the
	variables $\mathcal{V} \subseteq \mathcal{P}$
	such that any involved constraint in $\mathcal{L}$
	is involved in at least $t+1$ of these variables 
	(else, the weight of the error 
	could be further reduced).
	Given any such error, we can
	define a finite simple graph 
	$\mathcal{G} = (\mathcal{V}, \mathcal{E})$
	in which two vertices are adjacent 
	if and only if they
	have a common neighbour in the Tanner graph
	$(\mathcal{P},\mathcal{L},\mathcal{I})$.
	A fixed vertex $u\in \mathcal{V}$ then
	has $M+1$ neighbours in $(\mathcal{P},\mathcal{L},\mathcal{I})$ 
	each of which contributes at least $t$ neighbours 
	to $u$ in $\mathcal{G}$. Moreover, these contributions
	must be disjoint otherwise $4$-cycle-freeness is
	violated. The degree of $u$ in $\mathcal{G}$ 
	is then at least $(M+1)t$ implying that
	$\abs{\mathcal{V}} \geq (M+1)t+1$. 
\end{proof}

Note that this bound is usually---but not always---quite loose and does
not address the multiplicity of uncorrectable error patterns. 
Its purpose is solely to provide a theoretical guarantee
for asymptotic control over the error floor rather than to estimate
the error floor or prescribe parameters.

Our goal is then to find 
a ruler and permutations which preserve 
the scattering property \eqref{scattering}
which is indeed possessed by classical
staircase codes \cite{smith}.
We first observe that \eqref{scattering} already holds
for distinct constraints within the same constraint span since
they correspond to different rows of \eqref{constraint-span}.
In particular, for all $(n',i'_1),(n',i'_2) \in \mathcal{L}$ with
$i'_1 \neq i'_2$, we have 
\begin{equation*}
	\mathcal{N}((n',i'_1))\cap \mathcal{N}((n',i'_2)) = \emptyset\text{.}
\end{equation*}
Therefore, we can restrict our interest to distinct pairs of 
constraints $(n'_1,i'_1),(n'_2,i'_2)\in \mathcal{L}$ 
corresponding to distinct constraint spans $n'_1 \neq n'_2$
and $i'_1,i_2' \in [S]$ equal or unequal. Consider now the problem of constructing a point
of intersection between these constraint spans $(n,i,j)\in \mathcal{P}$
where
\begin{equation}\label{intersection}
	(n,i,j)\in \mathcal{N}((n'_1,i'_1))\cap \mathcal{N}((n'_2,i'_2))
\end{equation}
and $n'_2 > n'_1$ without loss of generality.
From \eqref{component-codeword}, we see that we must
firstly have that 
\begin{equation*}
	n = n'_1 - d_{k_1} = n'_2 - d_{k_2}
\end{equation*}
thus 
\begin{equation}\label{inter-block-condition}
	n'_2 - n'_1 = d_{k_2} - d_{k_1} > 0
\end{equation}
for some $k_1,k_2\in [M+1]$ with $k_2 > k_1$.
Secondly, given such a solution, we must further
have that
\begin{equation}\label{intra-block-condition}
	(i,j) = \pi_{k_1}(i'_1,j'_1) = \pi_{k_2}(i'_2,j'_2) 
\end{equation}
for some $j'_1, j'_2 \in [S]$. 

A natural sufficient condition for guaranteeing scattering
is then by a choice of ruler so that \eqref{inter-block-condition}
has a unique solution $k_1,k_2 \in [M+1]$ if any and a choice of permutations
so that \eqref{intra-block-condition} has a unique solution $j'_1, j'_2 \in [S]$.
Such solutions determine and thus guarantee the uniqueness of $(n,i,j)$
satisfying \eqref{intersection}. 
We refer to the property of 
having unique solutions
to \eqref{inter-block-condition} 
as \emph{inter-block scattering} since
it is the property that distinct constraint
spans share at most one block. 
On the other hand, we refer to the property of having unique
solutions to \eqref{intra-block-condition}
as \emph{intra-block scattering} since it
is the scattering property restricted
to variables within a block. Intra-
and inter-block scattering together
imply scattering. We now proceed to
specify rulers and permutations
leading to scattering.

\subsection{Intra-Block Scattering Permutations via Finite-Geometric Nets}

It is both practically convenient and mathematically
sufficient to consider permutations
defined by linear-algebraic operations on matrix indices $(i,j) \in 
[S]\times [S]$.
To do so, we must associate the elements of 
the index set $[S]$ with the elements of 
a finite commutative ring 
$\mathcal{R}$ of cardinality $S$.
A permutation $\pi\colon [S] \times [S] \longrightarrow [S] \times [S]$
can then be interpreted as a linear function
$\pi\colon \mathcal{R} \times 
\mathcal{R} \longrightarrow \mathcal{R} \times \mathcal{R}$
defined by an invertible $2\times 2$ matrix as
\begin{equation}\label{linear-algebraic-permutation}
	\pi(i,j) =
	(i,j) \cdot
	\begin{pmatrix}
		a & b \\
		c & d 
	\end{pmatrix}
	= 
	(ai + cj, bi + dj)
\end{equation}
where $a,b,c,d \in \mathcal{R}$ and 
$(ad-bc)^{-1}$ exists in $\mathcal{R}$.
The inverse function is then immediately 
obtained via the
matrix 
\begin{equation*}
	(ad-bc)^{-1}
	\begin{pmatrix}
		d & -b \\
		-c & a 
	\end{pmatrix}\text{.}
\end{equation*}

As will be seen shortly, intra-block scattering families of 
permutations are most easily constructed when $\mathcal{R}$
has sufficiently many elements with invertible differences.
Finite fields are an obvious choice since 
any pair of distinct elements has an invertible
difference. 
We can then take $\mathcal{R} = \mathbb{F}_q$, the
finite field of order $q$, with $S = q$ a prime
power and associate $[S]$ with 
$\{0, 1, \alpha, \alpha^2, \dots, \alpha^{S-2}\}$ 
where $\alpha$ is a primitive element of $\mathbb{F}_q$.
However, a convenient alternative option is to take
$\mathcal{R} = \mathbb{Z}_S$, the ring of integers modulo $S$,
where $S$  is chosen to have a sufficiently
large \emph{least prime factor} denoted $\mathsf{lpf}(S)$. 
This is because $\{0,1,\dots,\mathsf{lpf}(S)-1\}$ 
have invertible differences modulo $S$.
In particular, we have the following remark:
\begin{remark}\label{modular-invertibility-remark}
	For any positive integer $S$ with least prime factor $\mathsf{lpf}(S)$,
	the integers $\{\pm 1, \pm 2,\dots, \pm (\mathsf{lpf}(S)-1)\}$ 
	are coprime with $S$ and are thus multiplicatively invertible
	modulo $S$.
\end{remark}

We now proceed to define \emph{$(M+1,S)$-nets}
which are well-studied objects in combinatorics and finite
geometry with close connections or equivalences to orthogonal arrays, mutually
orthogonal Latin squares, transversal designs, and projective planes
as discussed in~\cite{Colbourn}.

\begin{definition}[$(M+1,S)$-net {\cite[Part~III:~Ch.~3]{Colbourn}}]\label{net-definition}
	An $(M+1,S)$-net is a set of $S^2$ elements referred to as \emph{points} 
	along with a collection of $M+1$ partitions of these points into subsets 
	of size $S$ called \emph{lines} such that distinct lines intersect in at 
	most one point. 
\end{definition}

We next make the key observation that $(M+1,S)$-nets 
are equivalent to intra-block scattering permutations:

\begin{remark}
	If the (variable-valued) entries of an $S\times S$ matrix $B$ are the points, and the rows of the $M+1$
	permuted copies $\Pi_k(B)$ for $k \in [M+1]$ are the $M+1$ partitions of these points into lines,
	then an $(M+1,S)$-net is equivalent to a collection of permutations
	of $[S]\times [S]$ with the following property: 
	For any distinct $k_1,k_2\in [M+1]$, 
	any row of $\Pi_{k_1}(B)$ has a single element in common with any row
	of $\Pi_{k_2}(B)$. Equivalently, for fixed distinct $k_1,k_2 \in [M+1]$
	and fixed distinct $i'_1,i'_2\in [S]$, the equation
	$\pi_{k_1}(i'_1,j'_1) = \pi_{k_2}(i'_2,j'_2)$ has 
	a unique solution for $j'_1, j'_2\in [S]$. 
\end{remark}

A large number of constructions and open
problems are associated with $(M+1,S)$-nets
and can be found in~\cite{Colbourn}. 
Fortunately, for our practically-oriented purposes, 
linear-algebraic permutations 
over conveniently chosen rings already provide enough flexibility
and are trivial to characterize. In particular, 
we have the following: 

\begin{proposition}[Complete characterization of linear-algebraic permutations
	defining $(M+1,S)$-nets]\label{intra-block-scattering-theorem}
A collection of $M+1$ permutations of $\mathcal{R} \times \mathcal{R}$, 
where $\mathcal{R}$ is a finite commutative ring of cardinality S, defined by
a collection of invertible $2 \times 2$ matrices as in \eqref{linear-algebraic-permutation}, 
define an $(M+1,S)$-net if and only if, for any pair of distinct matrices $A, \tilde A$ in the collection
where 
\begin{equation*}
	A = 
	\begin{pmatrix}
		a & b \\
		c & d
	\end{pmatrix},\;
	\tilde A = 
	\begin{pmatrix}
		\tilde a & \tilde b \\
		\tilde c & \tilde d
	\end{pmatrix}\text{,}
\end{equation*} 
we have that $c\tilde d - d \tilde c$ is invertible 
in $\mathcal{R}$.
\end{proposition}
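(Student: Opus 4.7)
My plan is to apply the characterization established in the preceding remark and thereby reduce the net property to a question about invertibility of a $2\times 2$ coefficient matrix over the finite commutative ring $\mathcal{R}$.

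To elaborate, the remark implies that the collection defines an $(M+1,S)$-net if and only if, for each pair of distinct matrices $A,\tilde A$ in the collection and each $(i'_1,i'_2)\in\mathcal{R}\times\mathcal{R}$, the equation $\pi_{k_1}(i'_1,j'_1)=\pi_{k_2}(i'_2,j'_2)$ admits a unique solution $(j'_1,j'_2)\in\mathcal{R}\times\mathcal{R}$. I would substitute the linear form \eqref{linear-algebraic-permutation} into both sides, equate coordinates, and collect the unknowns on the left to obtain the $2\times 2$ linear system
\begin{equation*}
\begin{pmatrix} c & -\tilde c \\ d & -\tilde d \end{pmatrix}
\begin{pmatrix} j'_1 \\ j'_2 \end{pmatrix}
=
\begin{pmatrix} \tilde a\, i'_2 - a\, i'_1 \\ \tilde b\, i'_2 - b\, i'_1 \end{pmatrix}\text{.}
\end{equation*}
Its coefficient matrix has determinant $d\tilde c - c\tilde d = -(c\tilde d - d\tilde c)$, which is a unit in $\mathcal{R}$ if and only if $c\tilde d - d\tilde c$ is.

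The ``if'' direction is then immediate from the standard fact that a square matrix over a commutative ring with unit determinant is invertible: invertibility of the coefficient matrix delivers a unique $(j'_1,j'_2)$ for every right-hand side, and hence for every $(i'_1,i'_2)$, establishing the net property.

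For the ``only if'' direction, there is a subtle gap: the net definition provides only that distinct lines meet in \emph{at most} one point, so a priori we only know that the system has \emph{at most} one solution for each $(i'_1,i'_2)$. I would close this gap with a counting argument: the $S$ rows of $\Pi_{k_1}(B)$ and the $S$ rows of $\Pi_{k_2}(B)$ produce $S^2$ row pairs whose intersections partition the $S^2$ matrix entries of $B$, forcing every row pair to meet in \emph{exactly} one point. Specializing to $(i'_1,i'_2)=(0,0)$ then shows that the homogeneous system has only the trivial solution, i.e., the coefficient matrix has trivial kernel; since $\mathcal{R}$ is finite and every non-unit element of a finite commutative ring is a zero divisor (by a pigeonhole argument on the multiplication-by-$x$ map), trivial kernel upgrades to invertibility, and hence to unit determinant. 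The main technical obstacle I anticipate is precisely this final step---leveraging finiteness of $\mathcal{R}$ together with the counting argument to bridge ``at most one intersection'' with ``unit determinant.''
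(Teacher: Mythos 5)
Your proof is correct and follows essentially the same route as the paper's: substitute the linear forms \eqref{linear-algebraic-permutation}, rearrange into a $2\times 2$ linear system in the unknown column indices, and note that the coefficient determinant is $\pm(c\tilde d - d\tilde c)$, whose invertibility in $\mathcal{R}$ is equivalent to unique solvability for every choice of row indices. The additional care you take in the ``only if'' direction---upgrading ``at most one'' to ``exactly one'' intersection by counting, and using finiteness of $\mathcal{R}$ to pass from trivial kernel to unit determinant---fills in details that the paper's one-sentence proof leaves implicit, but it is the same argument, not a different one.
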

\begin{proof}
	This follows by imposing that the 
	linear system $(i,j)A=(\tilde i,\tilde j)\tilde{A}$
	has a unique solution for $(j,\tilde j)$ 
	given fixed row indices $(i,\tilde i)$
	by rearranging and imposing an invertible determinant.
\end{proof}

We now provide some concrete examples.

\begin{example}\label{net-perms}
	If $\mathcal{R}=\mathbb{Z}_S$ and $M \leq \mathsf{lpf}(S)$, then
	the identity permutation $I_{2\times 2}$
	together with 
	\begin{equation*}
		\begin{pmatrix}
			0 & 1\\
			1 & z
		\end{pmatrix}
	\end{equation*}
	for $z \in \{0,1,\dots, M-1\}$ define an $(M+1,S)$-net. 
	This can be easily verified by applying Proposition \ref{intra-block-scattering-theorem} 
	together with Remark \ref{modular-invertibility-remark}.
	Concretely, we can take $\pi_0(i,j) = (i,j)$ and 
	define for $k \in \{1,2,\dots,M\}$
	\begin{equation*}
		\pi_k(i,j) =
		(j, i+(k-1)j)
		\pmod S 
	\end{equation*}
	where $M \leq \mathsf{lpf}(S)$.
\end{example}

\begin{example}\label{net-perms-invo}
	If $\mathcal{R}=\mathbb{Z}_S$ and $M \leq \mathsf{lpf}(S)$, then
	the identity permutation $I_{2\times 2}$
	together with the involutions
	\begin{equation*}
		\begin{pmatrix}
			-z & 1-z^2 \\
			1 & z
		\end{pmatrix}
	\end{equation*}
	for $z \in \{0,1,\dots, M-1\}$ define an $(M+1,S)$-net. 
	This is verified by noting that Proposition \ref{intra-block-scattering-theorem}
	is indifferent to the first rows of the matrices so long as they are invertible.
	Concretely, we take $\pi_0(i,j) = (i,j)$ and 
	define for $k \in \{1,2,\dots,M\}$
	\begin{equation*}
		\pi_k(i,j) = (-(k-1)i+j, (1-(k-1)^2)i + (k-1)j) \hspace{-1ex} \pmod S
	\end{equation*}
	where $M \leq \mathsf{lpf}(S)$.
\end{example}

\begin{example}\label{net-perms-finite-field}
	If $\mathcal{R}=\mathbb{F}_S$ with $S = q$ a prime power and 
	$M\leq S$,
	the identity permutation $I_{2\times 2}$
	together with
	\begin{equation*}
		\begin{pmatrix}
			0 & 1\\
			1 & z
		\end{pmatrix}
	\end{equation*}
	for  $z\in \{0, 1, \alpha, \alpha^2, \dots, \alpha^{M-2}\}$ with $\alpha$ a primitive element of $\mathbb{F}_S$ form an $(M+1,S)$-net.
\end{example}

We emphasize that in Examples \ref{net-perms} and
\ref{net-perms-invo}, we must 
choose $S$ so that $M\leq \mathsf{lpf}(S)$ 
otherwise an $(M+1,S)$-net is not formed. 
If $S = p$ a prime number, then
this condition becomes that $M\leq S$. 
If $M = S$ is required, then Example \ref{net-perms-finite-field}
provides more flexibility in the admissible
values of $S$ which can be a prime power. 

\subsection{$(M+1,1)$-nets}

It will be useful later to have
a family of trivial nets for the case of $S = 1$ 
for all $M \geq 0$. In particular, we interpret the collection of 
partitions in Definition \ref{net-definition} as
allowing for repeated copies of the same partition, 
or equivalently, repeated copies of the same
corresponding permutation. In the case
of $S > 1$, the definition is unchanged since
repeated copies are automatically excluded since
they necessarily lead to lines intersecting in 
$S > 1$ points. In the case of $S=1$, we admit
$M+1$ copies of the trivial partition of
the singleton set into itself leading to lines
which intersect in one point by virtue of only
containing that point. Alternatively, we define 
this case explicitly:
\begin{definition}[$(M+1,1)$-net]\label{trivial-net-family}
	An $(M+1,1)$-net comprises $M+1$ copies of the trivial partition of a singleton set
	into itself. The corresponding permutations of $[1]\times [1]$ are $M+1$ copies
	of the identity permutation $\pi_k(i,j)=(i,j)$ for $k\in [M+1]$ and
	$(i,j)\in [1]\times [1]$, i.e., $(i,j) = (0,0)$.
\end{definition}

\subsection{Inter-Block Scattering via Golomb Rulers}

Inter-block scattering rulers are equivalent to
\emph{Golomb} rulers defined as follows:

\begin{definition}[Golomb ruler {\cite[Part~VI:~Ch.~19]{Colbourn}}]
	An order $M+1$ ruler $(d_0,d_1,\dots,d_M)$ is a 
	\emph{Golomb} ruler if all
	differences between distinct marks 
	$d_{k_1}-d_{k_2}$ 
	where $k_1,k_2\in [M+1]$ and $k_1\neq k_2$, 
	are distinct.
	The \emph{length} of a Golomb ruler is its largest such difference.
\end{definition}

Unless otherwise stated, we will assume that all Golomb 
rulers are normalized so that $0 = d_0 < d_1 < \dots < d_M$.
The length of a Golomb ruler is then simply its largest mark $d_M$.
The \emph{set of distances} of a Golomb ruler 
is the set of its positive differences
\begin{equation*}
	\{d_{k_2}-d_{k_1} \mid k_2 > k_1, k_1,k_2\in [M+1]\}
\end{equation*}
which is necessarily of cardinality $\binom{M+1}{2} = \frac{(M+1)M}{2}$
by the distinctness of these differences.
This implies that the length $d_M$ of an order $M+1$ 
Golomb ruler satisfies the \emph{trivial bound}
\begin{equation}\label{Golomb-length-lower-bound}
	d_M \geq \frac{(M+1)M}{2}
\end{equation}
with 
equality if and only if its set of distances
is
precisely $\{1,2,\dots,(M+1)M/2\}$. 
A Golomb ruler for which the trivial
bound
holds with equality is said to be \emph{perfect}.

\begin{remark}
	Given a Golomb ruler, 
	we see that \eqref{inter-block-condition} 
	has a solution whenever $n'_2 - n'_1$ 
	is in its set of distances and that the 
	solution is unique due to the absence of repeated distances.
\end{remark}

Next, observe that Golomb rulers of every
order exist by taking, for example,
$d_k = 2^k-1$ for $k\in[M+1]$ in which the distance 
between consecutive marks is larger than the 
distances between all previous marks. 
This results
in an exponential length $d_M = 2^M-1$
which is very far from the quadratic lower
bound \eqref{Golomb-length-lower-bound}.
Indeed, this is a problem since we can
see from \eqref{encoder-span} that 
our encoding memory is 
proportional to the ruler length $d_M$.
Therefore, we prefer Golomb rulers which
have the smallest possible length
$d_M$ for a given order $M+1$. 
Such Golomb rulers 
are referred to 
as \emph{optimal} and are well-studied. 

We provide for the reader's convenience 
optimal Golomb rulers of order up to
$M+1 = 10$ in Table \ref{optimal-Golomb-ruler-table}.
Observe that the first four optimal Golomb rulers
are verified to be optimal by seeing that they
are perfect.
Beyond these cases, optimal Golomb rulers are typically
found by computer search.
Optimal Golomb rulers of order up to 
$M+1 = 20$ are compiled in \cite[Part~VI:~Ch.~19]{Colbourn}.
While it is an open problem to exhibit optimal
Golomb rulers of every order, 
Golomb rulers of length $d_M < (M+1)^2$
have been found for all $M+1 < 65000$ in \cite{Apostol-Golomb}.
Moreover, an explicit construction of Golomb rulers 
with cubic length in arbitrary order $M+1$ 
can also be found in \cite{Apostol-Golomb}.

\begin{table}[t]
	\centering \caption{Some optimal Golomb rulers from {\cite[Part~VI:~Ch.~19]{Colbourn}}}
	\begin{tabular}{c|c|l}\label{optimal-Golomb-ruler-table}
		order & length & optimal Golomb ruler \\
		\hline
		1 & 0 & 0\\
		2 & 1 & 0 1\\
		3 & 3 & 0 1 3\\
		4 & 6 & 0 1 4 6\\
		5 & 11 & 0 1 4 9 11\\
		6 & 17 & 0 1 4 10 12 17\\
		7 & 25 & 0 1 4 10 18 23 25\\
		8 & 34 & 0 1 4 9 15 22 32 34\\
		9 & 44 & 0 1 5 12 25 27 35 41 44\\
		10 & 55 & 0 1 6 10 23 26 34 41 53 55 
	\end{tabular}
\end{table}

\subsection{Recovery of Classical Staircase Codes $(L=1,M=1)$}\label{classical-staircase-recovery}

Consider taking $M = 1$ with the optimal Golomb ruler
of order $M+1 = 2$ and using the family of permutations given 
in Example \ref{net-perms} or in Example \ref{net-perms-invo}.
We get that $d_1 = 1$ and that $\pi_1(i,j) = (j,i)$
or, equivalently, that $\Pi_1(B) = B^\mathsf{T}$ (the matrix
transpose). Moreover, for $S > 1$, the requirement
$M \leq \mathsf{lpf}(S)$ holds trivially since 
$\mathsf{lpf}(S) \geq 2 \geq M$. We therefore
recover the classical staircase codes of 
\cite{smith} exactly. This motivates the following
definition for a bona fide generalized staircase code:
 
\begin{definition}[Generalized staircase code]\label{gsc-definition}
	A \emph{generalized staircase code} is a weak generalized 
	staircase code as in Definition \ref{w-gsc-definition}
	with the added properties that the permutations
	form an $(M+1,S)$-net and that the ruler is an
	order $M+1$ Golomb ruler.
\end{definition}

\section{Extended-Hamming-Based Generalized Staircase Codes}
\label{instantiation-section}

Staircase codes and related product-like codes 
\cite{sukmadji} typically use a $t$-error-correcting
Bose--Ray-Chaudhuri--Hocquenghem (BCH) component code
$\mathcal{C}$ to protect each coded bit with 
$M + 1 = 2$ component codewords. 
This typically requires that
$t \geq 3$ to achieve error floors below $10^{-15}$
\cite{Low-Rate-Hardware-Staircase} which imposes a limit
on our ability to improve the energy-efficiency of such 
schemes since the energy cost of a BCH decoding 
is roughly $\mathcal{O}(t^2)$ \cite{BCH-Power-Estimate}.
However, Proposition \ref{stall-pattern-bound} suggests that for
our generalized staircase codes which are scattering, 
we can reduce $t$ without increasing the error floor
by compensating with a larger bit degree $M+1$.
Accordingly, we consider the use of generalized staircase
codes with single-error-correcting, i.e., $t=1$, components.
Reasoning heuristically, 
we expect the resulting schemes to be more 
energy-efficient than 
those employing a smaller $M+1$ with a larger $t$
since the energy cost per corrected bit would be 
$\mathcal{O}(t^2/t) = \mathcal{O}(t)$ with the lower bound
of Proposition \ref{stall-pattern-bound} held \emph{constant}.

We consider the use of single-error-correcting, 
double-error-detecting extended Hamming codes
which have length $2^{r-1}$ and dimension $2^{r-1}-r$.
Such codes have the $r\times 2^{r-1}$ parity-check matrix 
with 
columns given by the binary representations
of $2j+1$ for every $j \in [2^{r-1}]$:
\begin{equation}\label{non-systematic-Hamming}
	\begin{pmatrix}
		0 & 0 & 0 & 0 & \cdots & 1\\
		0 & 0 & 0 & 0 &\cdots & 1 \\
		\vdots & \vdots & \vdots & \vdots &\ddots & \vdots \\
		0 & 0 & 0 & 0 &\cdots & 1\\
		0 & 0 & 1 & 1 &\cdots & 1\\
		0 & 1 & 0 & 1 &\cdots & 1\\
		1 & 1 & 1 & 1 &\cdots & 1
	\end{pmatrix}\text{.}
\end{equation}
Famously, syndrome decoding of such codes is trivial with
the first $r-1$ bits of the syndrome interpreted as 
an integer directly giving the error location $j \in [2^{r-1}]$.

Unfortunately, \eqref{non-systematic-Hamming} 
must in general have its columns permuted in order 
to have a corresponding systematic generator matrix. 
This 
would then require that both decoding and generation of 
columns of the parity-check matrix is done by lookup table
(LUT). Both of these operations must be done repeatedly
during the iterative decoding of generalized staircase codes.
It is thus useful, both for efficient software-based simulation
and for hardware implementation, to be able 
to perform these operations as simply as in the case of
\eqref{non-systematic-Hamming} unpermuted. We provide
some solutions to this problem for software and 
hardware respectively in 
Sections \ref{systematic-Hamming-section} and \ref{systematic-Hamming-section-hardware}.

\subsection{Simulations}

We consider a component code 
$\mathcal{C}$ of length $(M+1)S$ and dimension $(M+1)S-r$
which is obtained by shortening a parent extended Hamming
code of length $2^{r-1}$ 
in the first $2^{r-1} - (M+1)S$ positions
where 
${r-1 = \lceil \log_2((M+1)S) \rceil}$.
The parent codes are as defined 
by Table \ref{systematic-Hamming-table} later.
Furthermore, the permutations are taken
to be those of Example \ref{net-perms-invo} thus
requiring that $M \leq \mathsf{lpf}(S)$,
and the rulers are taken to be the optimal
Golomb rulers from Table \ref{optimal-Golomb-ruler-table}.
We then assume transmission across a 
binary symmetric channel (BSC) and
perform sliding window decoding of 
$W$ consecutive blocks at a time  where one iteration 
comprises decoding all constraints in the window consecutively and $I$ iterations are performed
before advancing the window by one block
similarly to \cite{smith}.
To facilitate simulation of the waterfall regime 
where error propagation events dominate, 
we convert to a
block code via a termination-like strategy in which a
frame is formed from $F$ consecutive blocks. For the
last $W$ of these $F$ blocks, the information
portion is presumed to be zero and only the parity
portion is transmitted with the encoder and decoder state
being reset between consecutive frames. This results in a 
block code rate of 
\begin{equation*}
	R = \frac{(S-r)(F-W)}{S(F-W)+Wr}
\end{equation*}
which approaches $R_\mathsf{unterminated} = 1-r/S$ as $F\to \infty$.
The reader is referred to \cite{sukmadji} for alternative
decoding and termination-like procedures for similar families of codes.

\begin{figure}[t]
	\centering
	\includegraphics[width=0.5\columnwidth]{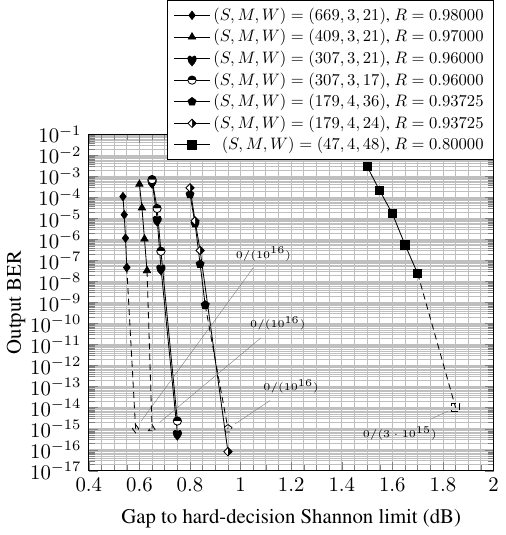}
	\caption{Simulation results for extended-Hamming-based generalized staircase
		codes with parameters given in Table \ref{sim-param-table}.}\label{sim-curves}
\end{figure}

\begin{table}[t]
	\centering \caption{Simulation parameters for Fig.~\ref{sim-curves}.}
	\begin{tabular}{c|c|c|c|c|c|c|c}
		$S$ & $M$ & $F$ & $W$ (Mbits) & $R$ & $I$ \ & gap (dB) & in.\ BER\\
		\hline
		669 & 3 &  725  & 21 (9.40) & 0.980 & 3 & 0.585 & 9.86e-4\\
		409 & 3 &  926  & 21 (3.51) & 0.970 & 3 & 0.650 & 1.57e-3\\
		307 & 3 &  885  & 21 (1.98) & 0.960 & 4 & 0.750 & 2.09e-3\\
		307 & 3 &  717  & 17 (1.60) & 0.960 & 4 & 0.750 & 2.09e-3\\
		179 & 4 & 1634  & 36 (1.15) & 0.937 & 4 & 0.950 & 3.25e-3\\
		179 & 4 & 1089  & 24 (0.77) & 0.937 & 4 & 0.950 & 3.25e-3\\
		47 & 4 &  912  & 48 (0.11) & 0.800 & 6 & 1.850 & 1.05e-2
	\end{tabular}\label{sim-param-table}
\end{table}

We developed a highly-optimized software simulator
for the BSC performance
of the resulting extended-Hamming-based 
generalized staircase codes. This simulator
is available online \cite{decsim-code}
and is capable of achieving simulated 
throughputs of several gigabits
per second per core of a modern multi-core consumer processor 
which allows for direct verification of sub-$10^{-15}$
error floors. This is enabled both by the simplicity of syndrome
decoding of Hamming components and 
the fact that decoder behaviour
depends only on the syndromes of the received component codewords.
Therefore, simulation can be performed directly in syndrome domain with individual
syndromes represented by individual integers.
Performing Bernoulli sampling of errors for a BSC with crossover
probability $\epsilon$ as geometrically-distributed jumps between
error locations leads to roughly $\epsilon S^2 (M+1)$ primitive 
operations being needed to update the decoder state in response 
to the reception of a new block. Estimating crudely and heuristically,
one decoding iteration similarly takes roughly 
$\epsilon W S^2 (M+1)$ primitive operations.

A few representative design examples are considered 
with parameters given in Table \ref{sim-param-table} and the 
corresponding simulation results given in Fig.~\ref{sim-curves}
as bit error rate (BER) curves.
We parameterize the BSC by the equivalent gap to the
hard-decision Shannon limit under binary phase-shift keying (BPSK) transmission 
on an additive white Gaussian noise (AWGN) channel
at the respective rates $R$ of the codes considered, 
rather than by the crossover probability. 
The last two rows of Table \ref{sim-param-table}
provide the gap to the hard-decision Shannon limit
along with the corresponding 
BSC crossover probability or input BER 
for the last point of each curve, i.e., the point of least BER.
Where zero bit errors were measured, a dashed line is 
used in Fig.~\ref{sim-curves} to indicate the simulated operating point
with the number of error-free bit transmissions also indicated. 
For all but the last code considered,
we simulate the transmission of at least $10^{16}$
bits at the last plotted operating point, providing
strong confidence that BERs are below $10^{-15}$.

Relative to codes using two triple-error-correcting BCH components
per bit in \cite{sukmadji,smith,Low-Rate-Hardware-Staircase},
the example codes perform roughly $0.2$ to $0.5$ dB worse
in terms of the gap to the hard-decision Shannon limit
as the code rate ranges from $0.98$ down to $0.8$.
In return, we get reduced complexity and latency.
Some heuristic estimates of performance, latency, and complexity
in comparison to classical staircase codes 
\cite{smith} are provided in Tables \ref{complexity-comparison-table-094}
and \ref{complexity-comparison-table-096}. In high-throughput contexts, 
latency is directly proportional to the number of bits
in the decoding window $WS^2$. On the other hand, the number
of component codewords in the decoding window is about $WS$. 
In practice, only a fraction of these must be decoded each iteration
and we assume this fraction to be constant for simplicity.
Moreover, we assume that for the relatively small values of $M$
considered, the cost
of each component code decoding is dominated by the cost of invoking the 
component code decoder $t^2$ rather than the cost of updating
the syndromes of codewords involved in each corrected bit.
This leads to a heuristic complexity measure of $IWSt^2$.

\begin{table}[t]
	\centering \caption{Heuristic estimates of performance, latency, and complexity for generalized staircase codes with $R \approx 0.937$}
	\begin{tabular}{r|c|c}
		& staircase \cite{smith,Dmitri-Hardware-Staircase} & this work \\\hline
		gap to Shannon at 1e-15 & 0.56 dB & 0.95 dB \\
		latency: $WS^2$ & 1.8e6 & 7.7e5 \\
		iterations: $I$ & 3 & 4 \\
		decodings per iteration: $WS$ & 3.6e3 & 4.3e3 \\
		cost per decoding: $t^2$ & 9 & 1 \\
		complexity score: $IWSt^2$ & 9.7e4 & 1.7e4
	\end{tabular} \label{complexity-comparison-table-094}
\end{table}

\begin{table}[t]
	\centering \caption{Heuristic estimates of performance, latency, and complexity for generalized staircase codes with $R \approx 0.96$}
	\begin{tabular}{r|c|c}
		& staircase \cite{smith,sukmadjithesis} & this work \\\hline
		gap to Shannon at 1e-15 & 0.49 dB & 0.75 dB \\
		latency: $WS^2$ & 4e6 & 1.6e6 \\
		iterations: $I$ & 4 & 4 \\
		decodings per iteration: $WS$ & 5e3 & 5.2e3 \\
		cost per decoding: $t^2$ & 9 & 1 \\
		complexity score: $IWSt^2$ & 18e4 & 2e4
	\end{tabular} \label{complexity-comparison-table-096}
\end{table}

\subsection{LUT-Free Decoding of Systematic Hamming Codes with Affine Permutations in $\mathbb{Z}_{2^{r-1}}$}\label{systematic-Hamming-section}

We now revisit the problem of constructing systematic Hamming codes
that admit a LUT-free decoding. This amounts to finding a permutation
of the columns of \eqref{non-systematic-Hamming} such that a corresponding
systematic generator matrix exists \emph{and}
both the forward and inverse permutations are 
easily calculated without a LUT. Note that the
existence of a systematic generator matrix 
is equivalent to the invertibility of the 
$r \times r$ matrix formed by the last $r$
columns of the parity-check matrix. We will refer
to a permutation of the columns of \eqref{non-systematic-Hamming}
achieving this as a \emph{systematizing} permutation.

A natural approach to this problem is to consider 
affine permutations in $\mathbb{Z}_{2^{r-1}}$.
In particular, we interpret the column index set 
$[2^{r-1}]$ as the ring of integers modulo $2^{r-1}$, 
i.e., $\mathbb{Z}_{2^{r-1}}$.
All odd integers are invertible modulo $2^{r-1}$ so
that any odd
$a \in \mathbb{Z}_{2^{r-1}}$ and any 
possibly non-odd $b\in \mathbb{Z}_{2^{r-1}}$ define
an algebraic permutation 
$\tau\colon \mathbb{Z}_{2^{r-1}}\longrightarrow \mathbb{Z}_{2^{r-1}}$ 
where
\begin{equation*}
	\tau(j) = aj + b \pmod{2^{r-1}}
\end{equation*}
and 
\begin{equation*}
	\tau^{-1}(j) = a^{-1}(j - b) \pmod{2^{r-1}}\text{.}
\end{equation*}
In the context of software-based simulation, 
computing $\tau$ and $\tau^{-1}$ can be faster on a 
modern processor than performing the equivalent table
lookup. By a computer search, values of $a$ and $b$ for 
each $(r-1)\in \{3,4,\dots,16\}$ were found 
such that $\tau$ is systematizing. The resulting values 
are provided in Table \ref{systematic-Hamming-table}.
This means that both decoding a syndrome
as well as generating columns 
of the parity-check matrix can be done as simply as computing 
$\tau^{-1}$ and $\tau$.
Adaptation to a code shortened in the first $s$ positions
is accomplished by replacing $b$ with $b+as$.

We report that, somewhat counter-intuitively, the choice 
of column permutation for the extended Hamming parity-check matrix
leads to a slight but measurable improvement in iterative decoding 
performance even in the absence of shortening. 
This is because the ordering of the 
columns impacts the spatial correlations between errors of
odd weight greater than or equal to three and the consequent 
miscorrections. As a result, the use of the extended Hamming codes
defined by Table \ref{systematic-Hamming-table} leads to
slightly improved performance relative to systematizing permutations
which preserve runs of naturally ordered columns. The same 
performance gain is also realized by a random choice
of systematizing permutation which suggests that this 
gain
is due to a decorrelation effect.

\begin{table}[t]
	\centering \caption{Parameters for systematizing affine permutations for extended Hamming codes}
		\begin{tabular}{l|l|l|l}\label{systematic-Hamming-table}
			$r-1$ & $a$ & $b$ & $a^{-1}$ \\ \hline
			3 &1 &1  &  1 \\
			4 &3 & 0 &  11\\
			5 &3  & 0 & 11 \\
			6 &3  &  3&  43\\
			7 &5  & 5 &  77\\
			8 &9  & 11 & 57 \\
			9 &19  & 19 &  27\\
			10 &27  & 27 & 531 \\
			11 &53  & 53 &  541\\
			12 &89  & 89 &  2025\\
			13 &163  & 170 &  4875\\
			14 &301  & 308 &  13989\\
			15 &553  & 553 & 14873\\
			16 &1065  & 1155 & 55321
		\end{tabular}
\end{table}

\subsection{LUT-Free Decoding of Systematic Hamming Codes with Boolean Functions}\label{systematic-Hamming-section-hardware}

\begin{table}[t]
	\centering\caption{Parameters for systematizing permutations for extended
		Hamming Codes via Boolean Functions}
	\begin{tabular}{r|c|c}\label{boolean-parameters}
		$i$ & $d^{(i)}$ & $\sum_{u=0}^{d^{(i)}-1} s^{(i)}_u2^u$ \\\hline
		2 &2 &0\\
		3 &2 &0\\
		4 &3 &0\\
		5 &3 &1\\
		6 &3 &2\\
		7 &2 &0\\
		8 &4 &0\\
		9 &4 &1\\
		10 &4 &2\\
		11 &4 &3\\
		12 &4 &4\\
		13 &4 &5\\
		14 &4 &6\\
		15 &3 &0
	\end{tabular}
\end{table}

In the context of software implementation, the approach presented
in Section \ref{systematic-Hamming-section} is as good as we can hope
for since modern processors already have dedicated multipliers. On
the other hand, in the context of hardware implementation, we might
wish to avoid realizing multipliers---which are expensive---altogether
and directly seek a permutation calculable with a small number of logic
gates. To this end, we extend an approach to LUT-free Hamming decoding 
provided in \cite{bliss}.

Firstly, we interpret the column index set $[2^{r-1}]$ 
as the vector space $\mathbb{F}_2^{r-1}$
taking $j \in [2^{r-1}]$ with binary representation 
$j = \sum_{u=0}^{r-2} b_u 2^u$ with $b_u \in \{ 0, 1 \}$
to correspond to the binary
vector $(b_0, b_1, \ldots, b_{r-2}) \in \mathbb{F}_2^{r-1}$.

Consider the class of permutations of $\mathbb{F}_2^{r-1}$
which map
$(b_0,b_1,\dots,b_{r-2})\in \mathbb{F}_2^{r-1}$ to 
$(c_0,c_1,\dots,c_{r-2})\in \mathbb{F}_2^{r-1}$
and take the form
\begin{align*}
	b_0 &\mapsto c_0 = b_0\\
	b_1 &\mapsto c_1 = b_1 \\
	b_i &\mapsto c_i = b_i + \prod_{u=0}^{d^{(i)}-1} (s^{(i)}_u + b^{(i)}_u)
	\;\text{for $2 \leq i \leq r-2$}
\end{align*}
where $d^{(i)}$ and 
$(s^{(i)}_0,s^{(i)}_1,\dots,s^{(i)}_{d^{(i)}-1})$ for $2 \leq i \leq r-2$
are parameters defining the permutation. We further
impose that $d^{(i)} \leq i$ which guarantees invertibility
since the expression
\begin{equation*}
	c_i = b_i + \prod_{u=0}^{d^{(i)}-1} (s^{(i)}_u + b^{(i)}_u)
\end{equation*} 
allows the recovery of $b_i$ from $c_i$ and $b_0,b_1,\dots,b_{i-1}$
and thus all $b_i$ from $c_i$ by induction. We will represent $(s^{(i)}_0,s^{(i)}_1,\dots,s^{(i)}_{d^{(i)}-1})$ 
with the corresponding integer $\sum_{u=0}^{d^{(i)}-1} s^{(i)}_u2^u$
for convenience.

This class of permutations generalizes the specific example
given in \cite{bliss} for $r = 8$ and is easy to implement in hardware
requiring for each $i$ a $d^{(i)}$-input AND 
gate, one XOR gate, and a NOT gate for every nonzero
value of $s^{(i)}_u$.

Table \ref{boolean-parameters} defines a special choice
of systematizing permutations for
extended Hamming codes of length up to $2^{16}$ 
with the property that the table can be truncated
at any row while maintaining the systematizing property.
These were found by a computer search.
For example, for extended Hamming codes of length
$2^6 = 64$, the map defined by Table \ref{boolean-parameters}
is
\[ 
	(b_0,b_1,b_2,b_3,b_4,b_5) \mapsto\\
	(b_0,b_1,b_2+b_0 b_1,b_3+b_0 b_1, b_4+ b_0 b_1 b_2,b_5+(1+b_0)b_1 b_2)
\]
The last seven columns of our parity-check matrix are
are the image of $\{57,58,\dots,63\}$ under this map evaluated over
$\mathbb{F}_2^6$ with an all-ones row appended. This yields
\begin{equation*}
	\begin{pmatrix}
		1  &1  &1  &1  &1  &0  &1\\
		1  &1  &1  &1  &1  &1  &0\\
		1  &1  &0  &1  &1  &1  &0\\
		0  &0  &1  &1  &1  &1  &0\\
		0  &1  &1  &0  &0  &1  &1\\
		1  &0  &1  &0  &1  &0  &1\\
		1  &1  &1  &1  &1  &1  &1
	\end{pmatrix}
\end{equation*}
which is invertible. Example code is also provided online
\cite{hosc-software}.

\section{Higher-Order Staircase Codes ($L\geq1$, $M\geq 1$)} \label{Generalized-TDZC}

Tiled diagonal zipper codes \cite{sukmadji-cwit} are a 
family of scattering generalizations of staircase
codes for which the encoding and decoding memory approaches
half of that of classical staircase codes while
maintaining a variable node degree of $M+1 = 2$. 
In this section, we further generalize our 
scattering generalized staircase codes 
from Section \ref{gsc-section} to both include 
this family and to extend it 
to arbitrary bit or variable node degree $M+1$. 
This generalization has two uses: The
first is that the encoding
and decoding memory can be reduced
while keeping all else essentially constant. 
The second is that it provides flexibility in hardware
designs by allowing the sidelength to be tuned while
keeping all else essentially constant.
The sidelength is a proxy for the 
degree of parallelism
that is naturally admitted by an encoding 
or decoding circuit. We refer to the resulting
scattering code 
family as \emph{higher-order staircase codes}
since it subsumes and extends the generalized
staircase codes of Definition \ref{gsc-definition}.

\subsection{Code Structure}

Consider starting with a weak generalized
staircase code of sidelength 
$S'$ and degree parameter $M'$.
We further have our permutations of 
$[S']\times [S']$ given by 
$\pi'_k$ for $k\in [M'+1]$
and the corresponding matrix-permuting functions
given by $\Pi'_k$ for $k\in [M'+1]$.
Finally, we have a ruler $(d_0,d_1,\dots,d_{M'})$
and a component code $\mathcal{C}$ of length
$(M'+1)S'$ and dimension $(M'+1)S'-r$. 
However, we have the following twist: We
define our constraint that the rows of
\begin{equation}\label{constraint-span-LZ}
	\big(
	\Pi'_{M'}(B_{n-d_{M'}}) \big\vert
	\cdots \big\vert
	\Pi'_{1}(B_{n-d_1}) \big\vert
	\Pi'_{0}(B_{n-d_0})
	\big)
\end{equation}
belong 
to $\mathcal{C}$ 
as holding for all integers $n\in L\mathbb{Z}$ where 
$L$ is a new fixed positive integer parameter. In other words, we 
restrict the constraint on the rows of \eqref{constraint-span-LZ}
to integers $n$ that are multiples of $L$.

Revisiting the Tanner graph perspective,
the constraint nodes become 
$\mathcal{L} = L\mathbb{Z} \times [S']$
and from \eqref{bit-neighbours}
we see that a variable node $(n,i,j)\in \mathcal{P}$
has the constraint $(n',i')\in\mathcal{L}$ as 
a neighbour if and only if $n' = n + d_k$ for
some $k \in [M'+1]$ where $n' \in L\mathbb{Z}$.
Therefore, we get a variable node degree of
\begin{equation}\label{irregular-degree}
	\hspace{-1ex}
	\abs{\mathcal{N}((n,i,j))} = 
	\abs{\{d_k \,\vert\, k \in [M'\hspace*{-0.75ex}+\hspace{-0.5ex}1], d_k \equiv -n \pmod*{L}\}}
\end{equation}
which depends on the block index $n$ and the ruler.
We seek to recover our original regular variable degree of 
$M+1$ for all variable nodes independently of $n$. 
This requires an \emph{$L$-uniform} ruler defined as follows:

\begin{definition}[$L$-uniform ruler]\label{uniform-ruler-def}
	A ruler $(d_0,d_1,\dots,d_{M'})$ of order $M'+1$ is 
	\emph{$L$-uniform} if it has an equal number of
	marks belonging to each of the residue classes 
	\begin{equation*}
		L\mathbb{Z}, L\mathbb{Z}+1, \dots, L\mathbb{Z}+L-1\text{.}
	\end{equation*}
	Necessarily, any such ruler can be constructed from a set
	of $L$ \emph{base} rulers $\mathcal{X} = \{X_0,X_1,\dots,X_{L-1}\}$ 
	each of order $M+1 = (M'+1)/L$ as the union of the
	entries of 
	\begin{equation*}
		LX_0, LX_1+1, \dots, LX_{L-1}+L-1
	\end{equation*}
	together with some ordering.
\end{definition}

\begin{remark}
	Normalizing an $L$-uniform ruler produces a ruler which is
	still $L$-uniform.
\end{remark}

\begin{remark}\label{uniform-ruler-partitioning-remark}
	Given $L$ order $M+1$ (base) rulers 
	\begin{equation*}
		X_\ell = (d_0^{(\ell)},d_1^{(\ell)},\dots,d_M^{(\ell)})
	\end{equation*}
	for $\ell\in [L]$, the elements of the corresponding 
	$L$-uniform ruler of order $L(M+1)$ can be arranged
	into an $L \times (M+1)$ array
	\begin{equation*}
		\begin{array}{llll}
			Ld_0^{(0)} & Ld_1^{(0)} & \cdots & Ld_M^{(0)} \\
			Ld_0^{(1)}+1 & Ld_1^{(1)}+1 & \cdots & Ld_M^{(1)}+1 \\
			\phantom{L}\vdots & \phantom{L}\vdots & \ddots & \phantom{L}\vdots\\
			Ld_0^{(L-1)}\!+\!L\!-\!1 & Ld_1^{(L-1)}\!+\!L\!-\!1& \cdots & Ld_M^{(L-1)}\!+\!L\!-\!1\\
		\end{array}
	\end{equation*}
	where the rows partition it into parts that are congruent modulo L
	and the columns partition it into parts that are not congruent modulo 
	L.
\end{remark}
From \eqref{irregular-degree} and 
Definition \ref{uniform-ruler-def},
we see that an $L$-uniform ruler
of order $L(M+1)$ leads to 
\begin{equation*}
	\abs{\mathcal{N}((n,i,j))} = M+1
\end{equation*}
for all $(n,i,j)\in \mathcal{P}$ 
as needed. Therefore, we can 
henceforth assume an $L$-uniform ruler as in 
Remark \ref{uniform-ruler-partitioning-remark}. 
We assume 
without loss of generality that
both the base rulers and the
corresponding $L$-uniform ruler
are normalized. Before
describing the encoding process
for the family of codes
obtained so far, we derive
the scattering conditions
which further reveal a
simplification:
Only $M+1$ distinct permutations 
are needed, one for each of the
columns of the array in Remark 
\ref{uniform-ruler-partitioning-remark}.

\subsection{Scattering Conditions}\label{hosc-scattering}

Observe that the inter-block
scattering condition
on \eqref{inter-block-condition}
becomes that for fixed $n'_1,n'_2\in
L\mathbb{Z}$ with $n'_2 > n'_1$,
\begin{equation*}
	n'_2 - n'_1 = d_{k'_2} - d_{k'_1} 
\end{equation*}
should have a  unique solution $k'_1,k'_2\in [M'+1]$ 
if any 
which necessarily satisfies 
$k_2' > k'_1$ \emph{and} 
$d_{k'_1}\equiv d_{k'_2} \pmod*{L}$
since 
$n'_2 - n'_1 \equiv 0 \pmod*{L}$.
We must then have for 
some $\ell \in [L]$ that
$d_{k'_1},d_{k'_2}\in \{Ld^{(\ell)}_k + \ell \mid k \in [M+1]\}$ thus
\begin{equation}\label{inter-block-condition-LZ}
	n'_2 - n'_1 = d_{k'_2} - d_{k'_1} = 
	L\cdot (d^{(\ell)}_{k_2} - d^{(\ell)}_{k_1})
\end{equation}
for some $\ell \in [L]$ and $k_1,k_2\in [M+1]$
with $k_2 > k_1$. Therefore, inter-block scattering
is obtained when \eqref{inter-block-condition-LZ}
has a unique solution for 
$\ell \in [L]$ and $k_1,k_2\in [M+1]$
with $k_2 > k_1$.

Secondly, the intra-block
scattering
condition on \eqref{intra-block-condition} becomes that 
\begin{equation}\label{intra-block-condition-LZ}
	\pi'_{k'_1}(i'_1,j'_1) = \pi'_{k'_2}(i'_2,j'_2) 
\end{equation}
must have a unique solution 
for $j'_1, j'_2 \in [S']$.
However, 
the uniqueness of 
solutions to \eqref{intra-block-condition-LZ}
need only be checked for $k'_1,k'_2\in [M'+1]$
that solve \eqref{inter-block-condition-LZ}
and thus correspond to $d_{k'_1}\equiv d_{k'_2} \pmod*{L}$.
We can then have $\pi'_{k'_1} = \pi'_{k'_2}$ for 
all $k'_1,k'_2\in [M'+1]$ such that 
$d_{k'_1} \not\equiv d_{k'_2} \pmod*{L}$.
This means that we can repeat every permutation
$L$ times so that only $(M'+1)/L = M+1$ distinct
permutations are needed to get intra-block scattering. 
In particular, consider a collection of $M+1$
permutations of
$[S']\times [S']$ given by 
$\pi_k$ for $k\in [M+1]$
and the corresponding matrix-permuting functions
given by $\Pi_k$ for $k\in [M+1]$. We can take
$\pi'_{k'} = \pi_{k}$ for 
every $k\in [M+1]$ and 
$k' \in [M'+1]$ such that 
$d_{k'} \in \{Ld^{(\ell)}_k + \ell \mid \ell \in [L]\}$.
Therefore, given $k'_1, k'_2 \in [M'+1]$,
$k_1,k_2\in[M+1]$, and $\ell \in L$ so that 
 $d_{k'_1} = Ld^{(\ell)}_{k_1} + \ell$ 
and 
$d_{k'_2} = Ld^{(\ell)}_{k_2} + \ell$, 
we have $\pi'_{k'_1} = \pi_{k_1}$ and $\pi'_{k'_2}=\pi_{k_2}$ 
so that \eqref{intra-block-condition-LZ} becomes to
\begin{equation}\label{intra-block-condition-LZ-reduced}
	\pi_{k_1}(i'_1,j'_1) = \pi_{k_2}(i'_2,j'_2)\text{.} 
\end{equation}
Intra-block scattering then corresponds to
uniqueness of solutions to \eqref{intra-block-condition-LZ-reduced} 
for $j'_1,j'_2 \in [S]$ where $k_1,k_2 \in [M+1]$. This
is obtained by taking $\pi_k$ for $k \in [M+1]$ to correspond
to an $(M+1,S')$-net.

\subsection{Encoding}

Note
that if the base rulers are
normalized, we get that $d_0^{(\ell)} = 0$ for each
$\ell \in [L]$. 
Observe then from Remark \ref{uniform-ruler-partitioning-remark}
that choosing the natural ordering for the corresponding
$L$-uniform ruler so that $d_0 < d_1 < \cdots < d_{L(M+1)-1}$
not only results in a normalized ruler, but also results in 
\begin{equation*}
	(d_0, d_1, \dots, d_{L-1}) = (0,1,\dots, L-1)\text{.}
\end{equation*}
This leads to
\begin{equation*}
	\pi'_0 = \pi'_1 = \cdots = \pi'_{L-1} = \pi_0
\end{equation*}
where $\pi_0$ is taken to be the identity
permutation. As a result, the 
constraint span \eqref{constraint-span-LZ} becomes
\begin{equation*}
	\big(
	\Pi'_{M'}(B_{n-d_{M'}}) \big\vert
	\hspace*{-0.5ex} \cdots \hspace*{-0.5ex} \big\vert
	\Pi'_{L}(B_{n-d_L})\big\vert
	B_{n-(L-1)} \big\vert
	\hspace*{-0.5ex} \cdots \hspace*{-0.5ex} \big\vert
	B_{n-1} \big\vert
	B_{n} 
	\big)
\end{equation*}
where $n \in L\mathbb{Z}$. Encoding is
performed $L$ blocks 
at a time with the $(n/L)$th $S' \times (LS'-r)$
block of information bits $B_{\frac{n}{L}}^{\mathsf{info}}$
adjoined to 
$M'-L+1$ permuted previous blocks 
to form
\begin{equation*}
	\big(
	\Pi'_{M'}(B_{n-d_{M'}}) \big\vert
	\hspace*{-0.5ex} \cdots \hspace*{-0.5ex} \big\vert
	\Pi'_{L+1}(B_{n-d_{L+1}})\big\vert
	\Pi'_{L}(B_{n-d_L})\big\vert
	B_{\frac{n}{L}}^{\mathsf{info}}
	\big)
\end{equation*}
whose rows are encoded
with systematic encoders 
for the component code 
$\mathcal{C}$ which is 
of dimension $(M'+1)S'-r$
and length $(M'+1)S'$.
This produces an $S'\times r$
by block of parity bits 
$B_{\frac{n}{L}}^{\mathsf{parity}}$
which is adjoined to the
information block to
produce $L$ consecutive
blocks as
\begin{equation*}
	\big(B_{n-(L-1)} \big\vert
	\hspace*{-0.5ex} \cdots \hspace*{-0.5ex} \big\vert
	B_{n-1} \big\vert
	B_{n} 
	\big)= 
	\big(
	\underbrace{\text{---}\; B_{\frac{n}{L}}^{\mathsf{info}}\; \text{---}}_{S' \times (LS'-r)}
	\;\big\vert\;
	\underbrace{\text{--}\; B_{\frac{n}{L}}^{\mathsf{parity}}\; \text{--}}_{S' \times r}
	\big)
\end{equation*}
such that the rows of $\eqref{constraint-span-LZ}$
belong to $\mathcal{C}$. 
As before, 
we can practically assume that
all blocks prior to the first
transmitted block are zero-valued.

To maintain the same code rate and component 
code $\mathcal{C}$ independently of $L$, we take
\begin{equation*}
	S' = \frac{S}{L}
\end{equation*}
so that the component code length
is $S'(M'+1) = S(M+1)$ and the 
code rate is 
\begin{equation*}
	R_\mathsf{unterminated} = 1-\frac{r}{LS'} = 1 - \frac{r}{S}\text{.}
\end{equation*}

\subsection{Higher-Order Staircase Codes}

It remains to define difference triangle sets (DTSs):

\begin{definition}[$(L,M)$-DTS {\cite[Part~VI:~Ch.~19]{Colbourn}}]
	An $(L,M)$-DTS is set of $L$ Golomb rulers
	$\mathcal{X} = \{X_0,X_1,\dots,X_{L-1}\}$ each of order
	$M+1$ whose respective sets of distances are disjoint. 
	Equivalently, $L$ order $M+1$ rulers given by 
	\begin{equation*}
		X_\ell = (d_0^{(\ell)},d_1^{(\ell)},\dots,d_M^{(\ell)})
	\end{equation*}
	where $\ell\in [L]$ form an $(L,M)$-DTS
	if all differences 
	$d^{(\ell)}_{k_1} - d^{(\ell)}_{k_2}$ 
	for $k_1,k_2\in [M+1]$ with $k_1\neq k_2$ and $\ell \in [L]$,
	are distinct. 
\end{definition}

It then follows that an $L$-uniform ruler of 
order $M'+1 = L(M+1)$ constructed by taking the 
base rulers to be an $(L,M)$-DTS 
leads to unique solutions to \eqref{inter-block-condition-LZ}
thus inter-block scattering. The construction is summarized
by the following definition:

\begin{definition}[Higher-order staircase code]\label{higher-order-staircase-code-def}
	Given all three of the following: 
	\begin{itemize}
		\item an $(L,M)$-DTS given by 
		\begin{equation*}
			0 = d^{(\ell)}_0 < d^{(\ell)}_1 < \dots < d^{(\ell)}_{M}
		\end{equation*}
		for $\ell\in [L]$ with corresponding $L$-uniform ruler of order $L(M+1)$
		\begin{equation*}
			d_0 < d_1 < \cdots < d_{L(M+1)-1}
		\end{equation*}
		given accordingly as
\[
			\{d_k \mid k \in [L(M+1)]\} = \\
			\{Ld^{(\ell)}_k + \ell \mid k \in [M+1], \ell \in [L]\}\text{;}
\]

		\item an $(M+1,S/L)$-net with corresponding  
		$M+1$ permutations of $[S/L]\times [S/L]$ given by $\pi_k$
		for $k \in [M+1]$ where $\pi_0$ is the identity permutation,
		and a resulting collection of 
		$L(M+1)$ permutations of $[S/L]\times [S/L]$ 
		given by $\pi'_{k'} = \pi_{k}$ for 
		every $k\in [M+1]$ and 
		$k' \in [L(M+1)]$ such that 
		$d_{k'} \in \{Ld^{(\ell)}_k + \ell \mid \ell \in [L]\}$\text{; and}
		\item a component code $\mathcal{C}$ of length $(M+1)S$
		and dimension $(M+1)S-r$\text{,}
	\end{itemize}
	a \emph{higher-order staircase code} of 
	rate $1-r/S$ is defined
	by the constraint on the bi-infinite sequence of 
	$(S/L) \times (S/L)$ matrices 
	$\dots,B_{-2},B_{-1},B_{0},B_{1},B_{2},\dots$ that the rows of 
	\begin{equation*}
		\big(
		\Pi'_{L(M+1)-1}(B_{n-d_{L(M+1)-1}}) \big\vert
		\cdots \big\vert
		\Pi'_{1}(B_{n-d_1}) \big\vert
		\Pi'_{0}(B_{n-d_0})
		\big)
	\end{equation*}
	belong to $\mathcal{C}$ for all $n \in L\mathbb{Z}$.
\end{definition}

Importantly, Section \ref{hosc-scattering} implies the following:
\begin{proposition}
	Higher-order staircase codes are scattering.
\end{proposition}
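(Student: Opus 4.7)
The plan is to assemble the scattering property directly from the two ingredient conditions that were already derived in Section \ref{hosc-scattering}. Given two distinct generalized constraints $(n'_1,i'_1), (n'_2,i'_2) \in \mathcal{L}$, I would first dispose of the easy case: if $n'_1 = n'_2$ then the two constraints correspond to distinct rows of the same constraint span \eqref{constraint-span-LZ}, which are supported on disjoint row indices, so their neighbourhoods in the Tanner graph are disjoint. This leaves the case $n'_1 \neq n'_2$, and WLOG $n'_2 > n'_1$.

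For the case of distinct constraint spans, I would show that any intersection point $(n,i,j) \in \mathcal{N}((n'_1,i'_1)) \cap \mathcal{N}((n'_2,i'_2))$ is uniquely determined, which forces $|\mathcal{N}((n'_1,i'_1)) \cap \mathcal{N}((n'_2,i'_2))| \leq 1$. By the derivation leading to \eqref{inter-block-condition-LZ}, such a point exists only if there exist $\ell \in [L]$ and $k_1 < k_2 \in [M+1]$ with $n'_2 - n'_1 = L(d^{(\ell)}_{k_2} - d^{(\ell)}_{k_1})$. Because the base rulers $X_0,\dots,X_{L-1}$ form an $(L,M)$-DTS, the multiset of values $d^{(\ell)}_{k_2} - d^{(\ell)}_{k_1}$ across $\ell \in [L]$ and $k_1 < k_2$ consists of distinct elements; hence such $(\ell, k_1, k_2)$ is unique if it exists. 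This uniqueness pins down the marks $d_{k'_1} = Ld^{(\ell)}_{k_1} + \ell$ and $d_{k'_2} = Ld^{(\ell)}_{k_2} + \ell$ of the $L$-uniform ruler, and therefore also the block index $n = n'_1 - d_{k'_1} = n'_2 - d_{k'_2}$.

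Next I would invoke the intra-block part. Given the uniquely determined $k_1 \neq k_2$ from the previous step, the permutation equation \eqref{intra-block-condition-LZ-reduced}, namely $\pi_{k_1}(i'_1,j'_1) = \pi_{k_2}(i'_2,j'_2)$, has a unique solution $(j'_1,j'_2) \in [S/L]\times[S/L]$ because the permutations $\pi_0,\dots,\pi_M$ form an $(M+1, S/L)$-net, and $k_1 \neq k_2$ places the two partitions of $[S/L]\times[S/L]$ in distinct classes of the net; the defining property of a net is precisely that any two lines drawn from distinct partitions meet in exactly one point. This determines $(i,j) = \pi_{k_1}(i'_1,j'_1)$ uniquely, and together with the previously determined $n$ yields at most one point $(n,i,j)$ in the intersection.

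The only potentially subtle bookkeeping is checking that $k_1 \neq k_2$ in the intra-block step, but this is automatic since the inter-block step produces $k_1 < k_2$. Combining the two cases, scattering \eqref{scattering} holds for every pair of distinct generalized constraints, so the higher-order staircase code is scattering. No real obstacle remains; the proposition is a short assembly of Section \ref{hosc-scattering} together with the DTS and net definitions.
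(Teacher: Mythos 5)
Your proof is correct and takes essentially the same route as the paper: the paper proves the proposition by pointing to Section \ref{hosc-scattering}, i.e., disposing of same-span constraints as distinct rows, reducing inter-block uniqueness to \eqref{inter-block-condition-LZ} (guaranteed by the $(L,M)$-DTS) and intra-block uniqueness to \eqref{intra-block-condition-LZ-reduced} (guaranteed by the $(M+1,S/L)$-net), exactly as you assemble it. Your write-up just makes explicit the bookkeeping (uniqueness of $(\ell,k_1,k_2)$ pinning down the marks, the block index $n$, and then the point $(i,j)$) that the paper leaves implicit.
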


We provide a concrete example:
\begin{example}\label{hosc-example}
	Consider taking $L=2$ and $M = 2$. The
	$L = 2$ Golomb 
	rulers of order $M+1 = 3$ given by
	\begin{align*}
		(d_0^{(0)}, d_1^{(0)},d_2^{(0)}) &= (0,6,7)\\
		(d_0^{(1)}, d_1^{(1)},d_2^{(1)}) &= (0,2,5)
	\end{align*}
	form a $(2,2)$-DTS since their respective 
	sets of distances $\{1,6,7\}$ and 
	$\{2,3,5\}$ are disjoint. We further have
	\begin{align*}
		2(0,6,7) &= (0,12,14)\\
		2(0,2,5)+1 &= (1,5,11)
	\end{align*}
	thus a corresponding $2$-uniform ruler
	given by 
	\begin{equation*}
		(d_0,d_1,d_2,d_3,d_4,d_6) = (0,1,5,11,12,14)\text{.}
	\end{equation*}
	Finally, we have 
	\begin{align*}
		 (2d_0^{(0)}, 2d_0^{(1)}+1) &= (0,1) = (d_0,d_1)\\
		 (2d_1^{(0)}, 2d_1^{(1)}+1) &= (12,5) =  (d_4,d_2)\\
		 (2d_2^{(0)}, 2d_2^{(1)}+1) &= (14,11) = (d_5,d_3)
	\end{align*}
	thus 
	\begin{align*}
		\pi'_0 &= \pi'_1 = \pi_0\\
		\pi'_4 &= \pi'_2 = \pi_1\\
		\pi'_5 &= \pi'_3 = \pi_2\text{.}
	\end{align*}
	We consider the $(3,S/2)$-net permutations 
	of Example 
	\ref{net-perms} which are given 
	by $\pi_0(i,j)=(i,j)$, $\pi_1(i,j)=(j,i)$,
	and $\pi_2(i,j) = (j,i+j) \pmod{S/2}$. 
	This leads to 
	$\Pi_0(B) = B$, $\Pi_1(B) = B^\mathsf{T}$, 
	and we further define $B^\pi = \Pi_2(B)$.
	Our higher-order staircase code is then defined by
	the constraint that the rows of 
	\begin{equation*}
		\big(B_{n-14}^\pi \big\vert B_{n-12}^\mathsf{T} \big\vert B_{n-11}^\pi 
		\big\vert B_{n-5}^\mathsf{T}\big\vert B_{n-1}\big\vert B_n\big)
	\end{equation*}
	belong to some specified component code $\mathcal{C}$
	for all $n \in 2\mathbb{Z}$. 
\end{example}

\begin{figure}[t]
	\centering
	\includegraphics[]{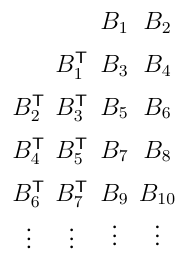}
	\caption{Visualization of a tiled diagonal zipper code
		corresponding to $L = 2$, $M = 1$, $(d_1^{(0)},d_1^{(1)}) = (1,2)$,
		and $\Pi_1(B)=B^\mathsf{T}$;
		rows belong to $\mathcal{C}$.}\label{tdzc-example-fig}
\end{figure}

\begin{figure}[t]
	\centering
	\includegraphics[]{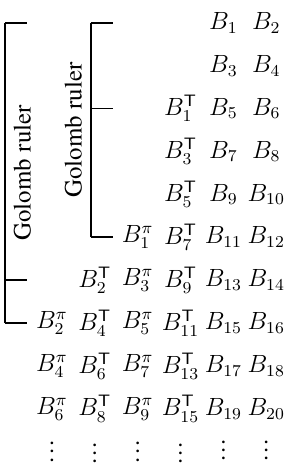}
	\caption{Visualization of the code of Example \ref{hosc-example}
		where $L = 2$ and $M = 2$; the respective sets of distances
		of the two Golomb rulers are disjoint and
		rows belong to $\mathcal{C}$.}\label{hosc-example-fig}
\end{figure}

A visualization of an $(L=2, M=1)$ code
corresponding to an instance of tiled
diagonal zipper codes \cite{sukmadji-cwit}
is provided in Fig.~\ref{tdzc-example-fig}.
A visualization of the $(L=2,M=2)$ code 
of Example \ref{hosc-example} follows it 
in Fig.~\ref{hosc-example-fig} illustrating
an increase in variable degree while
maintaining scattering.
In both of these pictures, the $L = 2$ rightmost columns 
represent the blocks that are actually transmitted 
while the columns to their left are delayed and permuted copies 
that are used to constrain the transmitted blocks.

Finally, we note that for higher-order staircase codes
as just defined, 
the code rate, $4$-cycle-freeness, component code, 
and variable and constraint node degrees are all independent of $L$.
It is therefore natural to study the effect of increasing 
$L$ while keeping all else fixed. 
We will find that while increasing $L$ reduces the block size to
$1/L^2$ times its size, the number of blocks that must be kept in memory
is $\Omega(L^2)$ meaning that we can at most reduce the memory
by a constant factor. However, this factor can still be meaningful
in practice and we may nonetheless need to minimize memory
for any given choice of $L$. Sections \ref{DTS-section} and \ref{computer-search-technique}
will deal with exhibiting DTSs that lead to
memory-optimal realizations of higher-order staircase codes.

Before proceeding, we comment briefly 
on the practical code performance implications of 
increasing $L$. Since the code rate, degree distribution, $4$-cycle-freeness, 
and component code are all unaffected by $L$, coding folklore 
suggests that the threshold and error floor 
performance should be \emph{roughly} independent of $L$.
For example, 
in the case of tiled diagonal zipper
codes corresponding to $(L \geq 1, M = 1)$, it
is found in \cite{sukmadji} that the 
gap to the hard-decision Shannon limit 
is increased by no more than $0.04$ dB as $L$ is increased
from $L = 1$ to $L = S = 1000$ with the component
code being a shortened triple-error-correcting
BCH code of length $2000$ and dimension $1967$.
A similar invariance can be observed for general higher-order 
staircase codes by experimenting with our simulators for these codes available
available online \cite{decsim-code,hosc-software}.
The reader is also referred to \cite{hbd-analysis}
for recent work on theoretical performance 
analysis under similar code families and decoding methods
to what is considered in this paper.
Moreover, the bound of Proposition \ref{stall-pattern-bound}
is independent of $L$ providing the same guarantee
on the coarse asymptotic behaviour of the error floor.

\subsection{Multiply-Chained and Other Extensions}

We now adapt a technique of Zhao et al.\ \cite{MC-TDZC}
in which multiple tiled diagonal zipper codes 
\cite{sukmadji-cwit} are chained in a circle to
boost their performance. This is straightforwardly
applicable to higher-order staircase codes. We introduce
a new positive integer parameter $C > 1$ and consider 
$C$ higher-order staircase code encoders indexed
by $c\in [C]$ and acting on $C$ independent input streams
with the following twist: The $c$th encoder uses the
memory of the $(c-1 \pmod*{C})$th encoder. This leads
to \emph{multiply-chained higher-order staircase codes}
formally defined as follows:

\begin{definition}[Multiply-chained higher-order staircase code]\label{multiply-chained-higher-order-staircase-code-def}
	Given all four of the following: 
	\begin{itemize}
		\item an integer $C > 1$;
		\item an $(L,M)$-DTS given by 
		\begin{equation*}
			0 = d^{(\ell)}_0 < d^{(\ell)}_1 < \dots < d^{(\ell)}_{M}
		\end{equation*}
		for $\ell\in [L]$ with corresponding $L$-uniform ruler of order $L(M+1)$
		\begin{equation*}
			d_0 < d_1 < \cdots < d_{L(M+1)-1}
		\end{equation*}
		given accordingly as
			\[
			\{d_k \mid k \in [L(M+1)]\} = \\
			\{Ld^{(\ell)}_k + \ell \mid k \in [M+1], \ell \in [L]\}\text{;}
			\]
		
		\item an $(M+1,S/L)$-net with corresponding  
		$M+1$ permutations of $[S/L]\times [S/L]$ given by $\pi_k$
		for $k \in [M+1]$ where $\pi_0$ is the identity permutation,
		and a resulting collection of 
		$L(M+1)$ permutations of $[S/L]\times [S/L]$ 
		given by $\pi'_{k'} = \pi_{k}$ for 
		every $k\in [M+1]$ and 
		$k' \in [L(M+1)]$ such that 
		$d_{k'} \in \{Ld^{(\ell)}_k + \ell \mid \ell \in [L]\}$\text{; and}
		\item a component code $\mathcal{C}$ of length $(M+1)S$
		and dimension $(M+1)S-r$\text{,}
	\end{itemize}
	a \emph{multiply-chained higher-order staircase code} of 
	rate $1-r/S$ is defined
	by the constraint on the $C$ bi-infinite sequences of 
	$(S/L) \times (S/L)$ matrices 
	$\dots,B_{-2}^{(c)},B_{-1}^{(c)},B_{0}^{(c)},B_{1}^{(c)},B_{2}^{(c)},\dots$ indexed by $c\in [C]$ that the rows of
	\[ 
		\bigg(
		\Pi'_{L(M+1)-1}(B_{n-d_{L(M+1)-1}}^{(c-1\pmod*{C})}) \bigg\vert\\ 
		\cdots \bigg\vert
		\Pi'_{L}(B_{n-d_{L}}^{(c-1\pmod*{C})}) \bigg\vert
		\Pi'_{L-1}(B_{n-d_{L-1}}^{(c)}) \bigg\vert
		\cdots
		\bigg\vert
		\Pi'_{0}(B_{n-d_0}^{(c)})
		\bigg)
	\]
	for each $c\in [C]$ belong to $\mathcal{C}$ for all $n \in L\mathbb{Z}$.
\end{definition}

We define a multiply-chained higher-order staircase code
with $C = 1$ to simply be a higher-order staircase code as in 
Definition \ref{higher-order-staircase-code-def} so that we
have a unified family defined for all $C\geq 1$. 
Lastly, we include one further generalization 
of this family which is as follows: Rather than considering
one component code $\mathcal{C}$, we can consider a ``time-varying'' family of component codes such as $\{\mathcal{C}_i^{(c)} \mid c \in [C],
i\in [S/L]\}$ where the $i$th row in the
constraint span in Definition \ref{multiply-chained-higher-order-staircase-code-def} belongs to $\mathcal{C}_i^{(c)}$. For example, this family can simply consist of $S/L$ permuted copies of a fixed code $\mathcal{C}$. With
this extension, the family of Definition \ref{multiply-chained-higher-order-staircase-code-def} includes 
the OFEC code family of \cite{OFEC} as the special case of 
$C = 2$, $M = 1$, and a special choice of time-varying
component code permutation defined therein.

\subsection{Memory}

We now consider the encoding memory requirements
for higher-order staircase codes. For the multiply-chained
variation, all requirements are simply multiplied by $C$ so
we restrict our attention to the case of $C = 1$.
At first glance,
encoding requires
that memory of the last $d_{M'}$
transmitted blocks be maintained.
A naive choice would be to maintain the
set of $L \cdot \max_{\ell\in [L]}d_M^{(\ell)}$ 
previous blocks $\mathsf{BUFF}_{\frac{n}{L}}$ 
defined by
\[
	\mathsf{BUFF}_{\frac{n}{L}}
	= \bigg\{B_{n-\mu} \;\bigg\vert\; \\ \mu \in
	 \left\{L,L+1,\dots, L \cdot \max_{\ell\in [L]}d_M^{(\ell)} + L-1\right\}
	 \bigg\}
\]
for encoding the $(n/L)$th information
block. However, we can observe in 
the case of Example \ref{hosc-example} illustrated
in Fig.~\ref{hosc-example-fig} that blocks
with even indices must be recalled for longer 
than blocks with odd indices. In general,
by examining the ruler decomposition 
given in Remark \ref{uniform-ruler-partitioning-remark},
we see that we can alternatively 
maintain $L$ independent buffers 
$\mathsf{BUFF}^{(\ell)}_{\frac{n}{L}}$ 
of sizes 
$d_M^{(\ell)}$ for $\ell \in [L]$ given by
\begin{equation*}
	\mathsf{BUFF}^{(\ell)}_{\frac{n}{L}}
	= \{B_{n-\mu} \mid \mu = L\nu + \ell, 
	\nu\in \{1,2,\dots,d_M^{(\ell)}\}\}
\end{equation*}
so that the total number of blocks stored is 
\begin{equation*}
	\sum_{\ell=0}^{L-1} d_M^{(\ell)}
	\leq L \cdot \max_{\ell\in [L]} d_M^{(\ell)}\text{.}
\end{equation*}
We then have an encoding memory of 
\begin{equation}\label{encoding-memory}
	\left(\frac{S}{L}\right)^2 \cdot \sum_{\ell=0}^{L-1} d_M^{(\ell)}
\end{equation}
bits (or nonbinary symbols). This is analogous 
to the concept of \emph{total memory} or 
\emph{overall constraint length} in the theory 
of convolutional codes 
\cite[Ch.~2]{Fund-Conv-Coding} with the key differences
being that bits are replaced with 
$(S/L)\times (S/L)$ blocks which admit permutation operations
and that one-bit memory elements are replaced 
accordingly with stores of $(S/L)^2$ bits. The
encoder style assumed here is analogous to
Forney's \emph{obvious realization} \cite{Forney-Convolutional}.

On the other hand, the decoding window must typically
be chosen as a number of blocks which is a constant 
multiple of $d_{L(M+1)-1}+1$, the smallest number of consecutive blocks 
which contains a complete component codeword. If 
we order our base rulers in descending order of length 
so that 
\begin{equation*}
	\max_{\ell\in [L]} d_M^{(\ell)} = d_M^{(0)}
	> d_M^{(1)} > \cdots > d_M^{(L-1)}
\end{equation*}
thus $d_{L(M+1)-1} = L\cdot d_M^{(0)} = L\cdot \max_{\ell\in [L]} d_M^{(\ell)}$,
we get
\begin{equation}\label{decoding-memory}
	\left(\frac{S}{L}\right)^2 \cdot \left(1+L\cdot\max_{\ell\in [L]} d_M^{(\ell)}\right)
\end{equation}
as a measure of the decoding memory. This
is analogous to the \emph{constraint length}
in the theory of convolutional codes.

We emphasize that \eqref{encoding-memory}
and \eqref{decoding-memory} are merely \emph{models}
for the encoding and decoding memory 
under a certain encoder
and decoder structure considered in \cite{sukmadji-cwit} 
which can
be subverted in practice. There are two main ways 
in which this is done: 
\begin{itemize}
	\item Each $(S/L)\times (S/L)$ block is compressed down 
	to a $(S/L)\times r$  future additive contribution to parity
	and thus is not directly stored.
	\item An encoder style analogous to the 
	\emph{observer canonical form} as opposed to the
	\emph{controller canonical form} \cite[Ch.~2]{Fund-Conv-Coding}
	or \emph{obvious realization} \cite{Forney-Convolutional}
	is used in which
	the encoding memory is proportional to  
	$\max_{\ell\in [L]} d_M^{(\ell)}$
	rather than $\sum_{\ell=0}^{L-1} d_M^{(\ell)}$.
	This encoder style was first recognized
	by Massey \cite{Massey}.
\end{itemize}
Both of these are done in the 
hardware implementations of a staircase code encoder
corresponding to
$(L=1, M=1)$ and a zipper code encoder 
corresponding to $(L = 32, M=1)$ 
found in \cite{Staircase-Hardware-Encoder}
and \cite{Zipper-Hardware-Encoder} respectively.
It is found in 
\cite{Zipper-Hardware-Encoder}
that there are significant hardware efficiency 
benefits to $L = 32$ relative to $L = 1$.
Nonetheless, one or both of
$\sum_{\ell=0}^{L-1} d_M^{(\ell)}$ and 
$\max_{\ell\in [L]} d_M^{(\ell)}$ 
are measures of hardware complexity.
Thus, we seek, for a given $L$ and $M$, 
an $(L,M)$-DTS for which $\sum_{\ell=0}^{L-1} d_M^{(\ell)}$  
and/or 
$\max_{\ell\in [L]} d_M^{(\ell)}$
are minimized. We defer the study of this problem
to Section \ref{DTS-section}.

\section{Extended-Hamming-Based Higher-Order Staircase Codes}\label{hosc-sim-section}

We consider now the use of memory-optimal 
higher-order staircase codes and their multiply-chained variation 
with extended Hamming component codes.
As discussed in the previous section, the construction of
such codes requires the construction
of DTSs with minimum scope and sum-of-lengths.
Such DTSs are provided in the Appendix with the subject
of how they are obtained deferred to 
Sections \ref{DTS-section} and \ref{computer-search-technique}.
We use the optimal DTSs from the Appendix together with the extended Hamming codes defined by Table \ref{systematic-Hamming-table} to construct our codes.

The encoder produces a $C(S/L) \times L (S/L)$ rectangle of coded bits 
since each of the $C$ higher-order staircase code encoders produces $L$ blocks
at a time each of dimensions $(S/L) \times (S/L)$. The decoding window size 
$W$ counts the number of such rectangles meaning that the number of bits
in the decoding window is $WC(S/L)^2L$. Similarly to before, we assume transmission 
across a BSC and
perform sliding window decoding of 
$W$ consecutive rectangles at a time  where one iteration 
comprises decoding all constraints in the window consecutively and $I$ iterations are performed
before advancing the window by one rectangle. We again use a termination-like procedure
as in Section \ref{instantiation-section} in which a frame is formed from 
$F$ consecutive rectangles. We take $F = 10^5 + W$ for all simulations
in this section so that the entire coding scheme is determined 
by the six-tuple of parameters $(L,M,S/L,C,W,I)$. The simulation curves
in Figures \ref{094-sims} and \ref{087-sims} are labeled with this
six-tuple appended with the resulting 
unterminated code rate $R_\mathsf{unterminated}$
and the decoding window size in bits for the reader's convenience.

\begin{figure}[t]
	\centering
	\includegraphics[width=0.5\columnwidth]{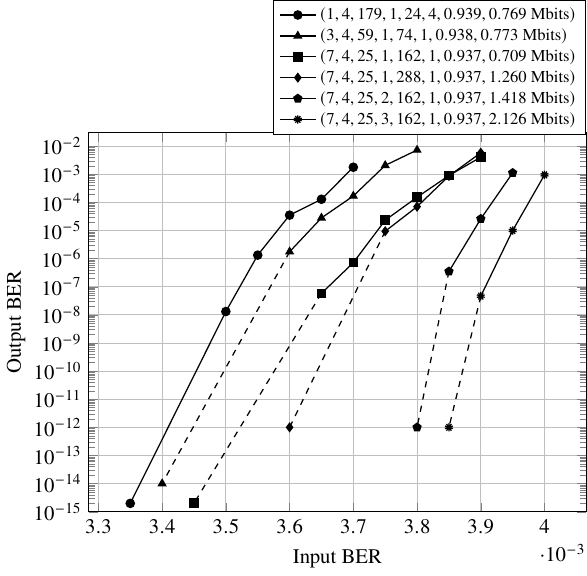}
	\caption{Simulation results for $R_\mathsf{unterminated} \approx 0.937$ examples with parameters $(L,M,S/L,C,W,I,R_\mathsf{unterminated},WC(S/L)^2L)$}\label{094-sims}
\end{figure}

\begin{figure}[t]
	\centering
	\includegraphics[width=0.5\columnwidth]{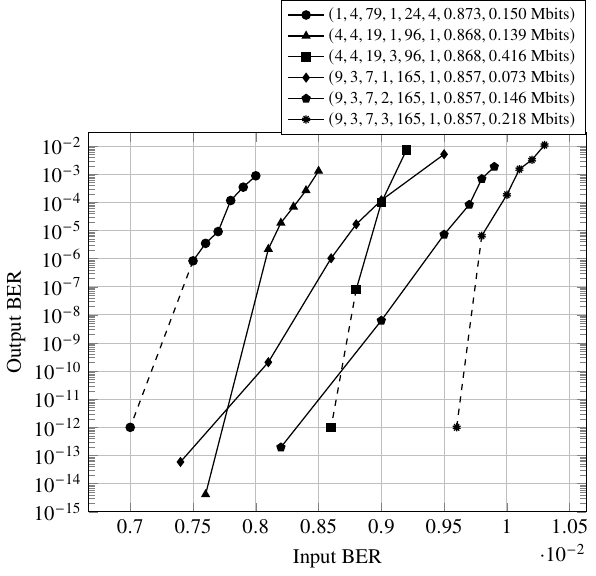}
	\caption{Simulation results for $R_\mathsf{unterminated} \approx 0.87$ examples with parameters $(L,M,S/L,C,W,I,R_\mathsf{unterminated},WC(S/L)^2L)$}\label{087-sims}
\end{figure}

One of our goals in this section is to demonstrate
the ability to construct a variety of codes with similar properties
by varying $L$ at a roughly constant code rate. 
We accordingly organize our simulations slightly differently with results for two different approximate
code rates given in Figures \ref{094-sims} and \ref{087-sims} respectively. 
The dashed lines indicate points at which
zero bit errors were measured after transmitting a number
of bits equal to ten times the reciprocal of the corresponding output BER.

Observe firstly that Fig.~\ref{087-sims} illustrates control over the error floor
and threshold via $M$ and $C$ respectively as predicted. 
Moreover, the examples in these figures perform
comparably to the codes compared in \cite{smith,MC-TDZC} while using only
Hamming components as opposed to BCH components and also improve 
over the results of Section \ref{instantiation-section} with generalized 
staircase codes $(L = 1)$ in terms of memory and decoding iterations. 

We provide heuristic estimates of the performance, latency,
and complexity for some examples as before. In Table \ref{complexity-comparison-table-094-hosc},
we compare the classical staircase code of \cite{smith} to the extended-Hamming-based 
higher-order staircase code with parameters
$(L,M,S/L,C,W,I) = (7,4,25,1,162,1)$.
As before, we assume that for the relatively small values of $M$
considered, the cost
of each component code decoding is dominated by the cost of invoking the 
component code decoder $t^2$ rather than the cost of updating
the syndromes of codewords involved in each corrected bit.
In Table \ref{complexity-comparison-table-087-hosc}, we compare
a multiply-chained tiled diagonal zipper code from \cite{MC-TDZC}
with parameters $(L,M,S/L,C,W,I) = (2,1,125,2,8,4)$ and 
BCH component codes with $t = 3$ to the extended-Hamming-based 
higher-order staircase code with parameters 
$(L,M,S/L,C,W,I) = (4,4,19,1,96,1)$. In both situations,
we see large reductions in latency and complexity at a
performance cost of roughly $0.3$ to $0.5$ dB.

\begin{table}[t]
	\centering \caption{Heuristic estimates of performance, latency, and complexity for higher-order staircase codes with $R_\mathsf{unterminated} \approx 0.937$}
	\begin{tabular}{r|c|c}
		& staircase \cite{smith,Dmitri-Hardware-Staircase} & this work \\\hline
		gap to Shannon at 1e-15 & 0.56 dB & 0.89 dB \\
		latency: $WC(S/L)^2L$ & 1.8e6 & 7.1e5 \\
		iterations: $I$ & 3 & 1 \\
		decodings per iteration: $WC(S/L)$ & 3.6e3 & 4.1e3 \\
		cost per decoding: $t^2$ & 9 & 1 \\
		complexity score: $IWC(S/L)t^2$ & 9.7e4 & 4.1e3
	\end{tabular} \label{complexity-comparison-table-094-hosc}
\end{table}
\begin{table}[t]
	\centering \caption{Heuristic estimates of performance, latency, and complexity for higher-order staircase codes with $R_\mathsf{unterminated} \approx 0.87$}
	\begin{tabular}{r|c|c}
				& Fig.\ 13 in \cite{MC-TDZC} & this work \\\hline
		gap to Shannon at 1e-8 & 0.75 dB & 1.25 dB \\
		latency: $WC(S/L)^2L$ & 5e5 & 1.4e5 \\
		iterations: $I$ & 4 & 1 \\
		decodings per iteration: $WC(S/L)$ & 2e3 & 1.8e3 \\
		cost per decoding: $t^2$ & 9 & 1 \\
		complexity score: $IWC(S/L)t^2$ & 7.2e4 & 1.8e3
	\end{tabular} \label{complexity-comparison-table-087-hosc}
\end{table}

We anticipate improved
performance--complexity--latency tradeoffs in a wider variety 
of contexts in light of recent works \cite{sukmadji-irregular,Huawei-Higher-Degree} 
which accomplish such goals via new product-like codes
with bit degree increased from two to three. The simulations
here are only meant to provide a few motivating examples. 
These results can be reproduced and extended 
using our simulation
tools available online \cite{decsim-code,hosc-software}.

\section{Difference Triangle Sets with Minimum Scope and Sum-of-Lengths}\label{DTS-section}

Consider
an $(L,M)$-DTS given by 
\begin{equation*}
	0 = d^{(\ell)}_0 < d^{(\ell)}_1 < \dots < d^{(\ell)}_{M}
\end{equation*}
for $\ell \in [L]$. As with Golomb rulers or $(1,M)$-DTSs, 
it is easy to construct an $(L,M)$-DTS for arbitrary $L$ and $M$
if no further objectives 
are placed. A well-studied
hard problem is that
of constructing an $(L,M)$-DTS 
of minimum \emph{scope}
where scope is the largest mark 
across all (normalized) rulers
$\max_{\ell\in [L]} d_M^{(\ell)}$.
Equivalently, scope is the maximum length among
all rulers. We again have a \emph{trivial bound} 
on the scope of an $(L,M)$-DTS given by
\begin{equation}\label{scope-lower-bound}
	\max_{\ell\in [L]} d_M^{(\ell)}
	\geq 
	L \cdot \frac{(M+1)M}{2}
\end{equation}
with equality if and only if 
the union of the sets of distances
of the $L$ rulers is precisely $\{1,2,\dots,L(M+1)M/2\}$.
An $(L,M)$-DTS for which the trivial bound 
holds with equality is said to be \emph{perfect}.
The union of the $L$ (disjoint) sets of distances
of the $L$ rulers forming an $(L,M)$-DTS will be referred 
to concisely as its \emph{set of distances}.

On the other hand, we also require
a minimization of the 
\emph{sum-of-lengths} 
$\sum_{\ell=0}^{L-1} d_M^{(\ell)}$.
To the
best of the authors' knowledge, this
problem has not been explicitly studied.
We begin by identifying when 
minimizing scope is equivalent
to minimizing sum-of-lengths.
Aside from the case of 
$L = 1$ where sum-of-lengths and scope are
trivially equivalent, we have the following:

\begin{proposition}\label{scope-slen-minimization-equivalence}
	A minimum scope $(L,M)$-DTS also has minimum
	sum-of-lengths if $M \in \{1,2\}$ and it is perfect. 
\end{proposition}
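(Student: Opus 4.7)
The plan is to exploit a coincidence specific to $M \in \{1,2\}$: for rulers of order $M+1 \in \{2,3\}$, the sum of all pairwise positive differences of a ruler is a fixed positive integer multiple of its length, so minimizing the total sum of distances across an $(L,M)$-DTS is equivalent to minimizing sum-of-lengths.

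First I would record the identity for the sum of positive differences of a single normalized ruler $(0,d_1,\dots,d_M)$. Writing $\sum_{0\le i<j\le M}(d_j-d_i)=\sum_{k=0}^{M}(2k-M)d_k$, the $M=1$ case gives $d_1$ and the $M=2$ case gives $2d_2$. Hence for every ruler $X_\ell$ in an $(L,M)$-DTS with $M\in\{1,2\}$, the sum of its (necessarily positive) distances equals $c_M\,d_M^{(\ell)}$, where $c_1=1$ and $c_2=2$. Summing this over $\ell\in[L]$ yields
\begin{equation*}
\sum_{\ell=0}^{L-1}\Big(\text{sum of distances of }X_\ell\Big)=c_M\sum_{\ell=0}^{L-1}d_M^{(\ell)}.
\end{equation*}

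Next I would establish the universal lower bound on the left-hand side. Since the DTS property asserts that the $L\binom{M+1}{2}$ distances of the $L$ rulers are distinct positive integers, the sum of these distances is at least $1+2+\dots+L\binom{M+1}{2}=\tfrac{1}{2}L\tfrac{M(M+1)}{2}\bigl(L\tfrac{M(M+1)}{2}+1\bigr)$, with equality holding if and only if the set of distances is exactly $\{1,2,\dots,L M(M+1)/2\}$---that is, if and only if the DTS is perfect. Combining this with the identity above gives an unconditional lower bound on $\sum_\ell d_M^{(\ell)}$, attained exactly when the DTS is perfect.

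Finally, given a perfect $(L,M)$-DTS with $M\in\{1,2\}$, it achieves this lower bound by hypothesis (its set of distances is exactly $\{1,2,\dots,LM(M+1)/2\}$), so its sum-of-lengths is the minimum possible over all $(L,M)$-DTSs. Since perfectness forces minimum scope by definition, such a DTS simultaneously minimizes scope and sum-of-lengths, which is the claim.

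The only delicate point is the identity $\sum_{i<j}(d_j-d_i)=\sum_k(2k-M)d_k$ specializing to a positive multiple of $d_M$; this is a one-line check for $M\in\{1,2\}$ but, as the coefficients $2k-M$ change sign for $M\ge 3$, there is no corresponding proportionality, which is why the argument---and the proposition---is limited to $M\in\{1,2\}$. Aside from this observation, the proof is bookkeeping, so I would not expect any substantive obstacle.
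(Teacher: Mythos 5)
Your proposal is correct and follows essentially the same route as the paper: for $M\in\{1,2\}$ the sum of all distances of a ruler is $c_M$ times its length ($c_1=1$, $c_2=2$, exactly the paper's identity $2\sum_\ell d_2^{(\ell)}=\sum_\ell\bigl((d_2^{(\ell)}-d_1^{(\ell)})+d_1^{(\ell)}+d_2^{(\ell)}\bigr)$), and the distinctness of the $L\binom{M+1}{2}$ positive distances gives a universal lower bound on the sum-of-lengths attained precisely by perfect DTSs. The extra remark on the general coefficients $2k-M$ is a nice way of seeing why the proportionality breaks for $M\ge 3$, but the core argument is the paper's.
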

\begin{proof}
	In the case of $M = 2$, observe that
	\begin{align}
		2\sum_{\ell = 0}^{L-1} d_2^{(\ell)} &= 
		\sum_{\ell = 0}^{L-1}
		((d_2^{(\ell)}-d_1^{(\ell)}) 
		+ d_1^{(\ell)}
		+ d_2^{(\ell)})\label{sum-over-distances}\\
		&\geq \frac{3L(3L+1)}{2}\label{1to3L}\text{.}
	\end{align}
	In particular, \eqref{sum-over-distances} is
	a sum over all three distances of all $L$ rulers
	which must be distinct positive integers. This
	leads to the lower bound \eqref{1to3L} which is 
	achieved precisely when perfect.
	When $M = 1$, the sum of 
	the lengths is precisely the sum of the 
	distances which means that a similar
	argument holds. 
\end{proof}

Beyond this case, the problem of minimizing
sum-of-lengths will be distinct and non-trivial.
If $M > 2$, there is no simple 
proportionality relationship
such as \eqref{sum-over-distances}
between a sum over all ruler lengths and 
a sum over all distances. Moreover,
even in the case of $M = 2$, a minimum scope $(L,2)$-DTS
which is not perfect does not necessarily 
minimize sum-of-lengths as illustrated by 
the following counterexample:

\begin{example}\label{scope-sum-of-lengths-non-equivalence}
	Consider the $(2,2)$-DTS of Example \ref{hosc-example} given by
	$\{(0,2,5),(0,6,7)\}$ which is of scope $7$. Observe that
	this scope is already minimum since 
	achieving the scope lower bound of $6$ is impossible 
	since it requires equality of \eqref{sum-over-distances}
	which is even and \eqref{1to3L} which is $21$. 
	However, the alternative $(2,2)$-DTS given by 
	$\{(0,3,4),(0,2,7)\}$ 
	which also has minimum scope also has a smaller 
	sum-of-lengths. 
\end{example}

Except for the cases of $M = 1$
and $M = 2$, it remains an open problem to 
exhibit minimum scope $(L,M)$-DTSs for arbitrary 
$L$. However, perfect
DTSs admit recursive constructions which lead to
constructive existence results for 
infinitely many values of $L$. 
It is known that perfect DTSs
can only exist for $M \leq 4$. 
Therefore, for the cases of 
$M = 3$ and $M = 4$, recursive
constructions provide
the strongest existence results. 
The values of
$L$ covered by such constructions 
are often (but not always) too large to be of relevance to
applications. Therefore, such constructions
do not 
eliminate the need for computer-search 
constructions for smaller values of $L$
but are nonetheless at least of theoretical
interest.
The reader is referred to \cite{Colbourn-IT,Klove}
and the references therein for more on this 
context.
We will study the implications
of one such construction due to 
Wild \cite{Wild} and Kotzig and Turgeon \cite{Kotzig-Turgeon}
for the problem of minimizing sum-of-lengths. We will
need the following:

\begin{definition}[Sharply $2$-transitive permutation group]
	A \emph{sharply $2$-transitive permutation group
	$G$ acting on a set $X$} 
	is a \emph{group}
	of permutations of $X$
	with the following property: 
	For each pair of ordered pairs 
	$(w,x), (y,z) \in X\times X$ with $w\neq x$
	and $y\neq z$, there exists a \emph{unique}
	permutation $g\in G$ such that
	$(g(w),g(x)) = (y,z)$.
\end{definition}

As noted
in \cite{Wild}, we can exhibit a
finite such group
as the group of invertible
affine transformations from a finite
field $\mathbb{F}_{M+1}$ to itself $x \mapsto ax+b$
where $M+1$ is a prime power and 
$a,b \in \mathbb{F}_{M+1}$ with $a\neq 0$
yielding $M(M+1)$ such permutations. 
For $M = 3$, the resulting 
permutations of $[M+1]$
in array format are
\begin{equation*}
\begin{array}{cccccccccccc}
	0 & 1 & 2 & 3 & 0 & 1 & 2 & 3 & 0 & 1 & 2 & 3 \\
	1 & 0 & 3 & 2 & 2 & 3 & 0 & 1 & 3 & 2 & 1 & 0 \\
	2 & 3 & 0 & 1 & 3 & 2 & 1 & 0 & 1 & 0 & 3 & 2 \\
	3 & 2 & 1 & 0 & 1 & 0 & 3 & 2 & 2 & 3 & 0 & 1 \\
\end{array}
\end{equation*}	
and for $M = 4$, the permutations
are defined by simple modular
arithmetic since $\mathbb{F}_{5}$
is $\mathbb{Z}_5$.

\begin{theorem}[Special case of the combining construction of
	Wild \cite{Wild} and Kotzig and Turgeon \cite{Kotzig-Turgeon}]\label{combining-construction}
	Let $M+1$ be a prime power and let the collection of $M(M+1)$ permutations
	\begin{align*}
		\rho_k\colon [M+1] &\longrightarrow [M+1]\\
		i &\longmapsto \rho_k(i)
	\end{align*}
	where $k\in [M(M+1)]$ be 
	a sharply $2$-transitive permutation group acting on $[M+1]$.
	Let $\mathcal{X}$ be a perfect $(L_1,M)$-DTS and 
	$\mathcal{Y}$ 
	be a perfect $(L_2,M)$-DTS with each given respectively by 
	\begin{equation*}
		\mathcal{X} = \{(x_0^{(i)},x_1^{(i)},\dots,x_M^{(i)}) \mid i \in [L_1] \}
	\end{equation*}
	and
	\begin{equation*}
		\mathcal{Y} = \{(y_0^{(i)},y_1^{(i)},\dots,y_M^{(i)}) \mid i \in [L_2] \}\text{.}
	\end{equation*}
	Let $\mathcal{W}$ be the set of $L_1L_2M(M+1)$ rulers given by
\[
		(L_2M(M+1)+1)\left(x_0^{(i)},x_1^{(i)},\dots,x_M^{(i)}\right)
		+\\
		\left(y_{\rho_k(0)}^{(j)},y_{\rho_k(1)}^{(j)},\dots,y_{\rho_k(M)}^{(j)}\right)
\]
	for $i\in [L_1]$, $j\in [L_2]$, and $k\in [M(M+1)]$.
	Let $\mathcal{Z}$ be the union of $\mathcal{Y}$, $(L_2M(M+1)+1)\mathcal{X}$, and $\mathcal{W}$.
	Then, $\mathcal{Z}$ is a perfect $(L_1L_2M(M+1) + L_1 + L_2, M)$-DTS. (However, the rulers
	are in unnormalized form.)
\end{theorem}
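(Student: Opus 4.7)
My plan is to establish the theorem by verifying that $\mathcal{Z}$ has the claimed cardinality and that the multiset of positive differences produced by $\mathcal{Z}$ equals exactly $\{1, 2, \ldots, L M(M+1)/2\}$, where $L := L_1 L_2 M(M+1) + L_1 + L_2$; by the trivial bound~\eqref{scope-lower-bound}, this will simultaneously deliver the $(L, M)$-DTS property, perfectness, and the claimed parameters. Abbreviating $N := L_2 M(M+1)$ so that the scaling factor in the construction is $N+1$, the three families $\mathcal{Y}$, $(N+1)\mathcal{X}$, and $\mathcal{W}$ contribute $L_2$, $L_1$, and $L_1 L_2 M(M+1)$ rulers respectively, summing to $L$; disjointness of rulers across and within families will follow from the disjointness of differences established below.

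The key idea will be to express every positive difference in $\mathcal{Z}$ as $(N+1)\Delta + \delta$ with $\Delta \in \mathbb{Z}_{\geq 0}$ and $|\delta| \leq N/2$. Since $|\delta| < N+1$, both $\Delta$ and $\delta$ can then be recovered by division with remainder. The differences of $\mathcal{Y}$ will cover $\{1, \ldots, N/2\}$ (the case $\Delta = 0$, by perfectness of $\mathcal{Y}$), and those of $(N+1)\mathcal{X}$ will cover $\{N+1, 2(N+1), \ldots, L_1 M(M+1)(N+1)/2\}$ (the case $\delta = 0$, by perfectness of $\mathcal{X}$); these two sets are automatically disjoint.

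The heart of the argument concerns $\mathcal{W}$. I will show that the positive difference between the $a$-th and $a'$-th marks (with $a > a'$) of the ruler indexed by $(i, j, k)$ is
\[
(N+1)\bigl(x_a^{(i)} - x_{a'}^{(i)}\bigr) + \bigl(y_{\rho_k(a)}^{(j)} - y_{\rho_k(a')}^{(j)}\bigr),
\]
with positivity guaranteed by $|\delta| \leq N/2 < N+1$. Injectivity of the map sending $(i, j, k, a, a')$ to this value will then be proved in three steps: (i) reducing modulo $N+1$ extracts the $y$-difference, and perfectness of $\mathcal{Y}$ recovers $j$ along with the ordered pair $(\rho_k(a), \rho_k(a'))$; (ii) dividing by $N+1$ extracts $x_a^{(i)} - x_{a'}^{(i)}$, and perfectness of $\mathcal{X}$ recovers $i$ along with the ordered pair $(a, a')$; (iii) sharp $2$-transitivity of $\{\rho_k\}$ then recovers $k$ from $((a, a'), (\rho_k(a), \rho_k(a')))$. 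Conversely, as $k$ varies with $(a, a')$ fixed, sharp $2$-transitivity makes $(\rho_k(a), \rho_k(a'))$ run over all $M(M+1)$ ordered pairs of distinct elements of $[M+1]$, so $\delta_{\rho_k(a), \rho_k(a')}^{(j)}$ enumerates all nonzero ordered differences of $\mathcal{Y}$, namely $\{\pm 1, \pm 2, \ldots, \pm N/2\}$, each exactly once as $(j, k)$ varies. Since the positive $x$-differences $x_a^{(i)} - x_{a'}^{(i)}$ enumerate $\{1, \ldots, L_1 M(M+1)/2\}$ once as $(i, a, a')$ vary with $a > a'$, the set of $\mathcal{W}$-differences will equal the union over $\Delta \in \{1, \ldots, L_1 M(M+1)/2\}$ of $\{(N+1)\Delta + \delta : 0 < |\delta| \leq N/2\}$. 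A short arithmetic check will then confirm that combining this with the $\mathcal{Y}$- and $(N+1)\mathcal{X}$-contributions yields $\{1, \ldots, (N+1) L_1 M(M+1)/2 + N/2\} = \{1, \ldots, L M(M+1)/2\}$.

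The main obstacle will be the injectivity step, which is precisely where the \emph{sharpness} of the $2$-transitivity hypothesis on $\{\rho_k\}$ is essential: without it, $k$ would not be uniquely determined by the image of a single ordered pair. The rest of the proof amounts to bookkeeping around the bound $|\delta| \leq N/2$, the ranges of $\Delta$ and $\delta$, and the arithmetic identity $L M(M+1)/2 = (N+1) L_1 M(M+1)/2 + N/2$ (equivalent to $L_2 M(M+1)/2 = N/2$).
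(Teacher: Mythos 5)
Your proposal is correct. Note that the paper itself does not prove Theorem~\ref{combining-construction}; it states it as a special case of the constructions of Wild and of Kotzig and Turgeon and only proves the companion Proposition~\ref{sum-of-lengths-under-combining}, so there is no in-paper proof to compare against line by line. Your argument is a sound self-contained proof: writing every difference as $(N+1)\Delta+\delta$ with $N=L_2M(M+1)$ and $|\delta|\le N/2$ (the bound coming from perfectness of $\mathcal{Y}$), recovering $(\Delta,\delta)$ by division with remainder, using perfectness of $\mathcal{X}$ to place the $\mathcal{W}$- and $(N+1)\mathcal{X}$-differences in the blocks $[(N+1)\Delta-N/2,(N+1)\Delta+N/2]$, and using sharp $2$-transitivity so that, for each fixed ordered pair $(a,a')$, the signed $\mathcal{Y}$-differences $\{\pm1,\dots,\pm N/2\}$ are each hit exactly once as $(j,k)$ varies --- this tiles $\{1,\dots,LM(M+1)/2\}$ exactly and yields both the DTS property and perfectness in one stroke. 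These are exactly the mechanisms the paper relies on in its proof of Proposition~\ref{sum-of-lengths-under-combining} (the scale factor $N+1$ dominating $|\delta|\le N/2$, and sharpness of the $2$-transitivity), so your route is consistent with the intended one. Two small points to tidy when writing it out: state explicitly that the rulers of $\mathcal{X}$ and $\mathcal{Y}$ may be taken normalized (so that ``$a>a'$'' indeed indexes the positive $x$-differences), and note that pairwise distinctness of the $L_1L_2M(M+1)+L_1+L_2$ rulers of $\mathcal{Z}$ follows from the global distinctness of differences you establish, so the union really has the claimed cardinality.
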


\begin{proposition}[Sum-of-lengths under combining]\label{sum-of-lengths-under-combining}
	Combining a perfect $(L_1,M)$-DTS with sum-of-lengths $S_1$ and a perfect 
	$(L_2,M)$-DTS with sum-of-lengths $S_2$ as in Theorem \ref{combining-construction} yields a perfect $(L_1L_2M(M+1)+L_1+L_2, M)$-DTS with sum-of-lengths $(L_2M(M+1)+1)^2S_1+S_2$.
\end{proposition}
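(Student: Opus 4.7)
The plan is to split $\mathcal{Z}$ into its three stated constituents $\mathcal{Y}$, $N \mathcal{X}$, and $\mathcal{W}$ (writing $N = L_2 M(M+1)+1$ throughout), compute the contribution to sum-of-lengths from each, and add. The first two contributions are essentially free: $\mathcal{Y}$ contributes $S_2$ directly, and scaling every mark of $\mathcal{X}$ by $N$ scales every length by $N$, so $N\mathcal{X}$ contributes $N S_1$. All the real work is in evaluating the total length of the $L_1 L_2 M(M+1)$ rulers making up $\mathcal{W}$.

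To analyze a ruler in $\mathcal{W}$, I fix indices $i \in [L_1]$, $j \in [L_2]$, $k \in [M(M+1)]$ and examine the entry at position $k' \in [M+1]$, namely $N x^{(i)}_{k'} + y^{(j)}_{\rho_k(k')}$. The key observation is that, since $\mathcal{Y}$ is a perfect $(L_2,M)$-DTS, its trivial-bound scope gives $y^{(j)}_M \leq L_2 M(M+1)/2 < N$, so the $y$-contribution to each entry lies in $[0, N)$. Consequently the ordering of the entries is governed by the $x$-part: the minimum occurs at $k'=0$, equal to $y^{(j)}_{\rho_k(0)}$, and the maximum occurs at $k'=M$, equal to $N x^{(i)}_M + y^{(j)}_{\rho_k(M)}$. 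After normalization the length of this ruler is therefore
\begin{equation*}
N x^{(i)}_M + y^{(j)}_{\rho_k(M)} - y^{(j)}_{\rho_k(0)}.
\end{equation*}

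Summing over all $(i,j,k)$, the first term gives $M(M+1) L_2 \cdot N \cdot \sum_{i} x^{(i)}_M = (N-1) N S_1$. For the remaining piece, I would invoke sharp $2$-transitivity: as $k$ ranges over the group, the ordered pair $(\rho_k(0), \rho_k(M))$ hits each ordered pair of distinct elements of $[M+1]$ exactly once, so
\begin{equation*}
\sum_{k} \bigl( y^{(j)}_{\rho_k(M)} - y^{(j)}_{\rho_k(0)} \bigr)
\;=\; \sum_{a \neq b} \bigl( y^{(j)}_b - y^{(j)}_a \bigr) \;=\; 0
\end{equation*}
by a trivial symmetry argument, and hence the entire second term vanishes. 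Adding the three contributions yields $S_2 + N S_1 + (N-1) N S_1 = S_2 + N^2 S_1$, which is the claim.

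I expect the only genuinely delicate point to be justifying that the maximum and minimum of each ruler in $\mathcal{W}$ really occur at the extreme positions $k'=M$ and $k'=0$; this is where the perfection of $\mathcal{Y}$ is used, via the bound $y^{(j)}_M < N$. Everything else is a matter of bookkeeping and the cancellation identity above.
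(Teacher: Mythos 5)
Your proposal is correct and follows essentially the same route as the paper's proof: the same three-way decomposition into $\mathcal{Y}$, $(L_2M(M+1)+1)\mathcal{X}$, and $\mathcal{W}$, the same use of the perfect-DTS scope bound on $\mathcal{Y}$ to show the scaled $x$-part dictates which marks are extremal (hence the length of each ruler in $\mathcal{W}$), and the same sharp $2$-transitivity cancellation of the $y$-terms. The only cosmetic difference is that you order the marks directly while the paper phrases it as preservation of the ordering of the differences; the bookkeeping is otherwise identical.
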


\begin{proof}
	Trivially, $\mathcal{Y}$ and $(L_2M(M+1)+1)\mathcal{X}$ contribute
	$S_2$ and $(L_2M(M+1)+1)S_1$ to the sum-of-lengths so it remains
	to calculate the contribution of $\mathcal{W}$. The difference
	between the $v$th and $u$th mark of the ruler in
	$\mathcal{W}$ corresponding to a fixed
	$i\in [L_1]$, $j\in [L_2]$, and $k\in [M(M+1)]$
	can be expressed as 
	\begin{equation*}
		(L_2M(M+1)+1)\left(x_v^{(i)}-x_u^{(i)}\right) + \left(y_{\rho_k(v)}^{(j)}-y_{\rho_k(u)}^{(j)}\right)\text{.}
	\end{equation*}
	Since the term $y_{\rho_k(v)}^{(j)}-y_{\rho_k(u)}^{(j)}$ satisfies
	\begin{equation*}
		-\frac{L_2M(M+1)}{2} \leq y_{\rho_k(v)}^{(j)}-y_{\rho_k(u)}^{(j)} \leq \frac{L_2M(M+1)}{2}\text{,}
	\end{equation*}
	the scale factor $(L_2M(M+1)+1)$ guarantees that the ordering
	of the differences $x_v^{(i)}-x_u^{(i)}$ is preserved. Assuming
	without loss of generality that the rulers of $\mathcal{X}$
	are normalized so that the length of the $i$th ruler of
	$\mathcal{X}$ is the difference corresponding to
	$v = M$ and $u = 0$, the same difference will then yield
	the length of the corresponding rulers in $\mathcal{W}$
	as
	\begin{equation}\label{ruler-length-in-W}
		(L_2M(M+1)+1)x_M^{(i)} + \left(y_{\rho_k(M)}^{(j)}-y_{\rho_k(0)}^{(j)}\right)\text{.}
	\end{equation}
	Lastly, sharp $2$-transitivity guarantees that
	\begin{equation*}
		\sum_{k = 0}^{M(M+1)-1}
		y_{\rho_k(M)}^{(j)}-y_{\rho_k(0)}^{(j)} = 0
	\end{equation*}
	in which case 
	summing \eqref{ruler-length-in-W}
	over all $i\in [L_1]$, $j\in [L_2]$, and $k\in [M(M+1)]$,
	yields 
	\begin{equation*}
		L_2M(M+1)(L_2M(M+1)+1)S_1
	\end{equation*}
	which is added to $S_2$ and $(L_2M(M+1)+1)S_1$
	to obtain the desired result.
\end{proof}

We will now consider each of the cases of
$M\in\{1,2,3,4\}$.
These cases are both practically 
the most relevant for applications
to higher-order staircase codes but
are also theoretically
interesting for the reasons
just discussed.

\subsection{The Case of $M = 1$}

For $(L,1)$-DTSs, 
we have the trivial bound 
\begin{equation*}
	\max_{\ell\in [L]} d_1^{(\ell)} \geq L
\end{equation*}
and trivially
\begin{equation}\label{deg-2-slen-lb}
	\sum_{\ell=0}^{L-1} d_1^{(\ell)} \geq \frac{L(L+1)}{2}
\end{equation}
with both of these bounds holding with equality
for the perfect $(L,1)$-DTS given by 
$\{(0,1),(0,2),\dots,(0,L)\}$.

\subsection{The Case of $M = 2$}

For $(L,2)$-DTSs, it is known that comparing the parities of 
\eqref{sum-over-distances} and \eqref{1to3L} 
slightly strengthens the trivial bound to
\begin{equation}\label{deg-3-scope-lb}
	\max_{\ell\in [L]} d_2^{(\ell)} \geq 
	\begin{cases}
		3L & \text{if } L \equiv 0 \text{ or } 1 \pmod*{4} \\
		3L + 1 & \text{if } L \equiv 2 \text{ or } 3\pmod*{4}
	\end{cases}\text{.}
\end{equation}
From this and \eqref{sum-over-distances}, it straightforwardly
follows that
\begin{equation}
\label{deg-3-slen-lb}
	\sum_{\ell=0}^{L-1} d_2^{(\ell)}\\ \geq  
	\begin{dcases}
		\frac{3L(3L+1)}{4} & \text{if } L \equiv 0 \text{ or } 1 \pmod*{4} \\
		\frac{(3L-1)3L}{4} + \frac{3L+1}{2} & \text{if } L \equiv 2 \text{ or } 3\pmod*{4}
	\end{dcases}
\end{equation}
where the first inequality holds with equality if
and only if the set of distances is precisely $\{1,2,\dots,3L\}$
and the second inequality holds with equality if and only if
set of distances is precisely $\{1,2,\dots,3L-1,3L+1\}$.

An explicit construction of $(L,2)$-DTSs 
achieving the bound \eqref{deg-3-scope-lb}
exists for arbitrary $L$ and was
given by Skolem \cite{Skolem} in the case of 
$L \equiv 0 \text{ or } 1 \pmod*{4}$
and by O'Keefe \cite{OKeefe} in the case of
$L \equiv 2 \text{ or } 3\pmod*{4}$.
In the case of $L \equiv 0 \text{ or } 1 \pmod*{4}$,
by perfectness, the bound \eqref{deg-3-slen-lb} is 
automatically achieved. In the case of 
$L \equiv 2 \text{ or } 3 \pmod*{4}$,
the construction of \cite{OKeefe} happens
to preclude $3L$ from the set of distances
and hence also achieves \eqref{deg-3-slen-lb}.
Note that Example \ref{scope-sum-of-lengths-non-equivalence}
demonstrates how it is possible to achieve \eqref{deg-3-scope-lb}
without achieving \eqref{deg-3-slen-lb}.
Nonetheless, the specific constructions of 
\cite{Skolem} and \cite{OKeefe} simultaneously
achieve \eqref{deg-3-scope-lb} and \eqref{deg-3-slen-lb}
for arbitrary $L$ settling this case.
We provide the constructions of \cite{Skolem,OKeefe}
in the Appendix for the reader's convenience since
\cite{Skolem,OKeefe} consider a different but
equivalent problem to the $(L,2)$-DTS problem
requiring a well-known but slightly tedious translation.

\subsection{The Case of $M = 3$}

For $(L,3)$-DTSs, the trivial bound is
\begin{equation}\label{deg-4-scope-lb}
	\max_{\ell\in [L]} d_3^{(\ell)} \geq 6L
\end{equation}
and it is a longstanding conjecture 
of Bermond \cite{Bermond} that this is achievable
for all $L$ excluding $L\in \{2,3\}$. 
Theorem \ref{combining-construction}
implies that there are infinitely
many values of $L$ for which 
\eqref{deg-4-scope-lb} is achieved, by, 
for example, repeatedly
combining the perfect $(1,3)$-DTS
$\{(0,1,4,6)\}$ with itself.
Bermond's conjecture has been verified
for all $L \leq 1000$ in \cite{deg-4-1000}
by using a wide variety of recursive constructions
to cover gaps as needed. 

\begin{proposition}
	The sum-of-lengths of 
	an $(L,3)$-DTS
	satisfies 
	\begin{equation}\label{deg-4-slen-lb}
		\sum_{\ell = 0}^{L-1}d_3^{(\ell)} \geq 5L^2+L
		\text{.}
	\end{equation}
	\begin{proof}
		By the distinctness of positive differences, we have
		\begin{align*}
			\sum_{u=1}^{6L} u
			&\leq\sum_{\ell = 0}^{L-1}\sum_{i = 0}^{3}\sum_{j=i+1}^{3}
			d_j^{(\ell)}-d_i^{(\ell)}\\
			&= 
			4\sum_{\ell = 0}^{L-1} d_3^{(\ell)}-\sum_{\ell = 0}^{L-1}(d^{(\ell)}_1-d^{(\ell)}_0)+
			(d^{(\ell)}_3-d^{(\ell)}_2)\\
			&\leq 
			4\sum_{\ell = 0}^{L-1} d_3^{(\ell)}-\sum_{u=1}^{2L}u
		\end{align*}
		which yields the desired result after rearranging.
	\end{proof}
\end{proposition}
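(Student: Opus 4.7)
The plan is to count the multiset of positive differences of an $(L,3)$-DTS in two different ways and combine the resulting inequalities. By the defining property of an $(L,3)$-DTS, the $6L$ positive differences $d_j^{(\ell)} - d_i^{(\ell)}$ for $0 \leq i < j \leq 3$ and $\ell \in [L]$ are distinct positive integers, so their sum is minimized when they form exactly the set $\{1,2,\dots,6L\}$, giving a lower bound of $\frac{6L(6L+1)}{2} = 18L^2 + 3L$.

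Next, I would express this same sum in terms of the ruler endpoints $d_3^{(\ell)}$. A direct expansion of $\sum_{0 \leq i < j \leq 3}(d_j^{(\ell)} - d_i^{(\ell)})$ gives coefficients $(-3,-1,+1,+3)$ on $(d_0^{(\ell)},d_1^{(\ell)},d_2^{(\ell)},d_3^{(\ell)})$, so, assuming normalization $d_0^{(\ell)} = 0$, the identity
\begin{equation*}
    \sum_{0 \leq i < j \leq 3}(d_j^{(\ell)} - d_i^{(\ell)}) = 4d_3^{(\ell)} - (d_1^{(\ell)} - d_0^{(\ell)}) - (d_3^{(\ell)} - d_2^{(\ell)})
\end{equation*}
holds for each $\ell$. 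Summing over $\ell$ and combining with the first bound yields
\begin{equation*}
    4\sum_{\ell=0}^{L-1} d_3^{(\ell)} \geq 18L^2 + 3L + \sum_{\ell=0}^{L-1}\bigl[(d_1^{(\ell)} - d_0^{(\ell)}) + (d_3^{(\ell)} - d_2^{(\ell)})\bigr].
\end{equation*}

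The final step uses the key observation that the $2L$ values $(d_1^{(\ell)}-d_0^{(\ell)})$ and $(d_3^{(\ell)}-d_2^{(\ell)})$ for $\ell \in [L]$ are themselves a subset of the distinct positive differences of the DTS, hence are distinct positive integers. Their sum is therefore at least $1+2+\cdots+2L = 2L^2 + L$. Substituting gives $4\sum_{\ell} d_3^{(\ell)} \geq 20L^2 + 4L$, and dividing by $4$ yields the claim.

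The one non-routine step is choosing \emph{which} subset of the $6L$ differences to pull out on the right-hand side: the expansion of $\sum_{i<j}(d_j - d_i)$ naturally singles out the two adjacent-pair differences $(d_1-d_0)$ and $(d_3-d_2)$ with coefficient $-1$, and fortuitously these are exactly $2L$ terms, so they admit the clean lower bound $L(2L+1)$. Any other grouping would either mix positive and negative contributions or leave a weaker combinatorial inequality, so this identity is really what makes the argument go through.
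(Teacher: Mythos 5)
Your proof is correct and follows essentially the same route as the paper: bound the sum of all $6L$ distinct positive differences below by $\sum_{u=1}^{6L}u$, rewrite that sum per ruler as $4d_3^{(\ell)}-(d_1^{(\ell)}-d_0^{(\ell)})-(d_3^{(\ell)}-d_2^{(\ell)})$, and then bound the $2L$ pulled-out adjacent differences below by $\sum_{u=1}^{2L}u$ since they too are distinct. The only difference is presentational: you make the coefficient expansion and the final arithmetic explicit, which the paper leaves to the reader.
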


It is not the case that $(L,3)$-DTSs achieving 
\eqref{deg-4-scope-lb} such as those listed in 
\cite{Shearer-IBM-Optimal-List} achieve \eqref{deg-4-slen-lb}.
However, we conjecture that it is
possible to simultaneously achieve \eqref{deg-4-scope-lb}
and \eqref{deg-4-slen-lb} for all $L$ excluding
$L\in \{2,3,4,5\}$.
By using a computer-search technique described in
Section \ref{computer-search-technique}, we found
$(L,3)$-DTSs simultaneously achieving \eqref{deg-4-scope-lb}
and \eqref{deg-4-slen-lb} for all $6\leq L \leq 15$
and these are provided in the Appendix.
In the cases of $L\in \{2,3,4,5\}$, 
one or both of \eqref{deg-4-scope-lb}
and \eqref{deg-4-slen-lb} are not achievable
but simultaneous absolute minimization of
scope and sum-of-lengths is still possible
and the optimal solutions are provided in the
Appendix.
These cases are small enough to be
handled by an exact 
mixed-integer linear programming (MILP)
solver by slightly modifying the
MILP formulation of the minimum scope DTS 
problem given in \cite{MILP-DTS}. 
Alternatively,
an exhaustive backtracking DTS
search program of Shearer \cite{Shearer-IBM-Programs,Shearer-IBM-Programs-Paper} 
can be used for these small cases.

Next, we consider constructing
an infinite family via Theorem
\ref{combining-construction}.
For all integers $i \geq 0$, 
let $\mathcal{Z}_i$ be a 
perfect $(L_i,3)$-DTS
with sum-of-lengths 
$S_i$ obtained by combining 
$\mathcal{Z}_{i-1}$ and $\mathcal{Z}_{0}$
via Theorem \ref{combining-construction}
where $\mathcal{Z}_0$ is a fixed 
perfect $(L_0,3)$-DTS with sum-of-lengths
$S_0$. Proposition \ref{sum-of-lengths-under-combining}
provides linear recurrences for $L_i$ and $S_i$
which we can solve to get
\begin{equation*}
	L_i = \frac{1}{12}\left((12L_0+1)^{i+1}-1\right)
\end{equation*} 
and
\begin{equation*}
	S_i = 
	\frac{6S_0}{6L_0^2+L_0}L_i^2
	+ 
	\frac{S_0}{6L_0^2+L_0}L_i\text{.}
\end{equation*}
If $\mathcal{Z}_0$ satisfies
$\eqref{deg-4-slen-lb}$ so that
$S_0 = 5L_0^2+L_0$, we get
\begin{equation*}
	S_i = 
	\left(5+\frac{1}{6L_0+1}\right)L_i^2
	+\left(1-\frac{L_0}{6L_0+1}\right)L_i\text{.}
\end{equation*}
Since the largest $L_0$ for which we have exhibited
a perfect $(L_0,3)$-DTS achieving \eqref{deg-4-slen-lb}
in the Appendix is $L_0 = 15$, the best result we can get is
as follows:
\begin{proposition}\label{infinite-deg-4-fam}
	There are infinitely many values of $L$ for which a
	perfect $(L,3)$-DTS with sum-of-lengths
	\begin{equation*}
		\left(5+\frac{1}{91}\right)L^2+\left(1-\frac{15}{91}\right)L
	\end{equation*}
	exists.
\end{proposition}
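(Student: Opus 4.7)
My plan is to derive the proposition as a direct corollary of iterating the combining construction (Theorem \ref{combining-construction}) starting from a suitable base object. Specifically, I would take $\mathcal{Z}_0$ to be the perfect $(15,3)$-DTS exhibited in the Appendix that attains the sum-of-lengths bound \eqref{deg-4-slen-lb}, so $L_0 = 15$ and $S_0 = 5L_0^2 + L_0 = 1140$. Since $M+1 = 4$ is a prime power, Theorem \ref{combining-construction} is applicable at every stage, and I would define $\mathcal{Z}_{i+1}$ recursively as the result of combining $\mathcal{Z}_i$ with $\mathcal{Z}_0$.

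First, I would invoke Theorem \ref{combining-construction} inductively to conclude that each $\mathcal{Z}_i$ is itself a perfect $(L_i,3)$-DTS. Then, applying Proposition \ref{sum-of-lengths-under-combining} with $(L_1,L_2) = (L_i,L_0)$ and $M = 3$ (so $M(M+1) = 12$), I obtain the two coupled linear recurrences
\begin{align*}
L_{i+1} &= 12 L_0 L_i + L_i + L_0,\\
S_{i+1} &= (12 L_0 + 1)^2 S_i + S_0,
\end{align*}
with initial data $L_0, S_0$ as above. The first recurrence can be rewritten as $12 L_{i+1} + 1 = (12 L_0 + 1)(12 L_i + 1)$, from which the closed form $L_i = \tfrac{1}{12}\bigl((12 L_0 + 1)^{i+1} - 1\bigr)$ follows immediately; in particular $L_i$ is strictly increasing, so the set $\{L_i : i \geq 0\}$ is infinite.

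Next, I would solve the second recurrence. Since the homogeneous solution is a scalar multiple of $(12L_0+1)^{2i}$ and hence of $(12L_i+1)^2$, the closed form is quadratic in $L_i$; a standard ansatz $S_i = \alpha L_i^2 + \beta L_i$ substituted into the recurrence, together with the initial condition $S_0$, yields precisely the expression
\[
S_i = \frac{6S_0}{6L_0^2 + L_0}\, L_i^2 + \frac{S_0}{6L_0^2 + L_0}\, L_i
\]
displayed in the excerpt. Substituting the optimal value $S_0 = 5L_0^2 + L_0$ simplifies this to $S_i = (5 + \tfrac{1}{6L_0+1}) L_i^2 + (1 - \tfrac{L_0}{6L_0+1}) L_i$, and finally specializing $L_0 = 15$ (so $6L_0 + 1 = 91$) gives exactly the coefficients claimed.

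The main obstacle is not conceptual but organizational: it is the verification that the appendix indeed supplies a perfect $(15,3)$-DTS meeting \eqref{deg-4-slen-lb}, since this is what allows $L_0 = 15$ rather than a smaller value and hence what produces the denominator $91$ in the final expression. Everything else is routine algebra on linear recurrences together with bookkeeping of the perfectness and sum-of-lengths under the already-established Proposition \ref{sum-of-lengths-under-combining}.
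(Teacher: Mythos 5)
Your proposal is correct and follows essentially the same route as the paper: iterate Theorem \ref{combining-construction} with $\mathcal{Z}_{i}$ obtained by combining $\mathcal{Z}_{i-1}$ with a fixed $\mathcal{Z}_0$, use Proposition \ref{sum-of-lengths-under-combining} to get the recurrences $L_{i+1}=12L_0L_i+L_i+L_0$ and $S_{i+1}=(12L_0+1)^2S_i+S_0$, solve them in closed form, and specialize to the perfect $(15,3)$-DTS from the Appendix achieving \eqref{deg-4-slen-lb} (so $S_0=5L_0^2+L_0=1140$ and $6L_0+1=91$). Your closed forms and final coefficients match the paper's exactly, so there is nothing to add beyond the routine check of the Appendix entry that you already flag.
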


\subsection{The Case of $M = 4$}
For $(L,4)$-DTSs, it is known that the
trivial bound can be slightly strengthened
by parity considerations to
\begin{equation}\label{deg-5-scope-lb}
	\max_{\ell\in [L]} d_4^{(\ell)} \geq 
	\begin{cases}
		10L & \text{if } L \equiv 0 \pmod*{2} \\
		10L + 1 & \text{if } L \equiv 1 \pmod*{2}
	\end{cases}\text{.}
\end{equation}

\begin{proposition}
	The sum-of-lengths of an $(L,4)$-DTS
	satisfies
	\begin{equation}\label{deg-5-slen-lb}
		\sum_{\ell=0}^{L-1} d_4^{(\ell)}\\ \geq  
		\begin{dcases}
			9L^2+\frac{3}{2}L & \text{if } L \equiv 0 \pmod*{2} \\
			9L^2+\frac{3}{2}L + \frac{1}{2} & \text{if } L \equiv 1\pmod*{2}
		\end{dcases}\text{.}
	\end{equation}
\end{proposition}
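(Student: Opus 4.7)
The plan is to mirror the proof of \eqref{deg-4-slen-lb} from the $M=3$ case, but to replace the identity relating $d_M^{(\ell)}$ to a sum over \emph{all} ten pairwise differences of the ruler with a more economical identity that captures $2 d_4^{(\ell)}$ as a sum of exactly six distinct pairwise differences.

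First I would establish the identity
\begin{equation*}
2 d_4^{(\ell)} = (d^{(\ell)}_1 - d^{(\ell)}_0) + (d^{(\ell)}_2 - d^{(\ell)}_1) + (d^{(\ell)}_3 - d^{(\ell)}_2) + (d^{(\ell)}_4 - d^{(\ell)}_3) + (d^{(\ell)}_2 - d^{(\ell)}_0) + (d^{(\ell)}_4 - d^{(\ell)}_2),
\end{equation*}
which follows by adding the two trivial decompositions $d^{(\ell)}_4 = \sum_{k=1}^{4}(d^{(\ell)}_k - d^{(\ell)}_{k-1})$ and $d^{(\ell)}_4 = d^{(\ell)}_2 + (d^{(\ell)}_4 - d^{(\ell)}_2)$. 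The six summands are six of the ten pairwise differences of the ruler, and the Golomb property guarantees that they are pairwise distinct within each ruler.

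Next I would sum this identity over $\ell \in [L]$. Because the $L$ rulers of the DTS have pairwise disjoint difference sets, the $6L$ terms appearing on the right are $6L$ pairwise distinct positive integers. Hence
\begin{equation*}
2 \sum_{\ell=0}^{L-1} d_4^{(\ell)} \geq 1 + 2 + \cdots + 6L = 3L(6L+1).
\end{equation*}

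Finally I would close with a parity argument. The left-hand side is even, while $3L(6L+1)$ has the same parity as $L$. For $L$ even the bound is already parity-compatible and dividing by $2$ yields $\sum d_4^{(\ell)} \geq 9L^2 + 3L/2$. For $L$ odd, $3L(6L+1)$ is odd, so the even quantity $2\sum d_4^{(\ell)}$ must in fact be at least $3L(6L+1)+1$, yielding $\sum d_4^{(\ell)} \geq 9L^2 + 3L/2 + 1/2$ after division.

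The only real obstacle is locating the right identity. The direct analogue of the $M=3$ argument (summing all ten pairwise differences of each ruler and tracking the coefficient with which $d_4^{(\ell)}$ appears) yields only $6 \sum d_4^{(\ell)} \geq 54L^2 + 7L$, hence $\sum d_4^{(\ell)} \geq 9L^2 + 7L/6$, which is weaker than the target by $L/3$ and not salvageable by parity alone. Choosing to split via $d_2^{(\ell)}$ specifically is essential: the alternative splits $d_4 = d_1 + (d_4 - d_1)$ or $d_4 = d_3 + (d_4 - d_3)$ would reintroduce the gap $g_1^{(\ell)}$ or $g_4^{(\ell)}$ and destroy distinctness of the six summands.
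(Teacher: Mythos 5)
Your proof is correct and is essentially the paper's own argument: you use the same decomposition $2d_4^{(\ell)} = (d_2^{(\ell)}-d_0^{(\ell)})+(d_4^{(\ell)}-d_2^{(\ell)})+\sum_{k=1}^{4}(d_k^{(\ell)}-d_{k-1}^{(\ell)})$ into six distinct differences, the same comparison of the resulting $6L$ distinct positive integers with $1+2+\cdots+6L$, and the same integrality/parity step to split the even and odd $L$ cases. No gaps.
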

\begin{proof}
	By the distinctness of positive differences,
	we have
\[
		2\sum_{\ell = 0}^{L-1} d_4^{(\ell)}
		= 
		\sum_{\ell = 0}^{L-1}\bigg[ 
		(d^{(\ell)}_2-d^{(\ell)}_0) +
		(d^{(\ell)}_4-d^{(\ell)}_2)+
		(d^{(\ell)}_1-d^{(\ell)}_0) +\\
		(d^{(\ell)}_2-d^{(\ell)}_1) +
		(d^{(\ell)}_3-d^{(\ell)}_2) +
		(d^{(\ell)}_4-d^{(\ell)}_3) 
		\bigg]
		\geq 
		\sum_{u=1}^{6L} u
\]
	which yields the result after integrality
	considerations.
\end{proof}

We conjecture that both \eqref{deg-5-scope-lb}
and \eqref{deg-5-slen-lb} are simultaneously achievable
for all $L$ excluding $L\in \{2,3,4\}$.
We provide in the Appendix DTSs
simultaneously achieving \eqref{deg-5-scope-lb} and \eqref{deg-5-slen-lb}
for all $5 \leq L \leq 10$ excluding $L = 9$
for which a search is ongoing.  
The problem of simultaneously achieving 
\eqref{deg-5-scope-lb} and \eqref{deg-5-slen-lb}
was considered implicitly by Laufer who solved it in the case of $L \in \{6,8\}$ in \cite{Laufer}. 
This was done
because the structure implied
by achieving 
\eqref{deg-5-slen-lb} can be exploited 
to simplify the enumeration of solutions.
A solution achieving both of \eqref{deg-5-scope-lb}
and \eqref{deg-5-slen-lb} for $L = 10$ was 
found by Khansefid et al.\ \cite{KTG,Shearer-IBM-Optimal-List,Shearer-IBM-Optimal-List-Credits}.
In the cases of $L\in \{5,7\}$, we found
solutions
with Shearer's search programs \cite{Shearer-IBM-Programs,Shearer-IBM-Programs-Paper}.
For $L\in \{2,3,4\}$, it is not
only the case that one or both of \eqref{deg-5-scope-lb} and \eqref{deg-5-slen-lb}
is not achievable, but also that we get a tradeoff between
minimizing scope and minimizing sum-of-lengths when $L = 4$. In
this case, we provide a pair of Pareto optimal solutions. These smaller
cases again can be handled by MILP or Shearer's search programs \cite{Shearer-IBM-Programs,Shearer-IBM-Programs-Paper}.

For all integers $i \geq 0$, 
let $\mathcal{Z}_i$ be a 
perfect $(L_i,4)$-DTS
with sum-of-lengths 
$S_i$ obtained by combining 
$\mathcal{Z}_{i-1}$ and $\mathcal{Z}_{0}$
via Theorem \ref{combining-construction}
where $\mathcal{Z}_0$ is a fixed 
perfect $(L_0,4)$-DTS with sum-of-lengths
$S_0$. Solving the 
linear recurrences for $L_i$ and $S_i$, we get
\begin{equation*}
	L_i = \frac{1}{20}\left((20L_0+1)^{i+1}-1\right)
\end{equation*} 
and
\begin{equation*}
	S_i = 
	\frac{10S_0}{10L_0^2+L_0}L_i^2
	+ 
	\frac{S_0}{10L_0^2+L_0}L_i\text{.}
\end{equation*}
We must have even $L_0$ for $\mathcal{Z}_0$ to be perfect. Therefore,
taking $\mathcal{Z}_0$ satisfying $\eqref{deg-5-slen-lb}$ yields
$S_0 = 9L_0^2+(3/2)L_0$ thus
\begin{equation*}
	S_i = 
	\left(9+\frac{6}{10L_0+1}\right)L_i^2
	+\left(\frac{3}{2}-\frac{6L_0}{10L_0+1}\right)L_i\text{.}
\end{equation*}
Taking $\mathcal{Z}_0$ to be the perfect $(10,4)$-DTS achieving \eqref{deg-5-slen-lb} in the Appendix, we get the following:
\begin{proposition}\label{infinite-deg-5-fam}
	There are infinitely many values of $L$ for which a
	perfect $(L,4)$-DTS with sum-of-lengths
	\begin{equation*}
		\left(9+\frac{6}{101}\right)L^2+\left(\frac{3}{2}-\frac{60}{101}\right)L
	\end{equation*}
	exists.
\end{proposition}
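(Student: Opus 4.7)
The plan is to iterate Theorem \ref{combining-construction} starting from a seed perfect $(10,4)$-DTS $\mathcal{Z}_0$ that achieves the sum-of-lengths bound \eqref{deg-5-slen-lb} with equality, so that $L_0 = 10$ and $S_0 = 9L_0^2 + (3/2)L_0 = 915$. Such a $\mathcal{Z}_0$ is exhibited in the Appendix (credited there to Khansefid et al.). The prime-power hypothesis of Theorem \ref{combining-construction} is satisfied since $M+1 = 5$, and the required sharply $2$-transitive group on $[5]$ is furnished by the affine group of $\mathbb{F}_5$ identified in the preceding text.

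Next, I would define $\mathcal{Z}_i$ inductively as the output of Theorem \ref{combining-construction} applied to $\mathcal{Z}_{i-1}$ and $\mathcal{Z}_0$. Each $\mathcal{Z}_i$ is perfect by construction, and Proposition \ref{sum-of-lengths-under-combining} gives the first-order linear recurrences $L_i = 201\, L_{i-1} + 10$ and $S_i = 201^2\, S_{i-1} + 915$ (using $L_0 M(M+1)+1 = 201$ and $S_0 = 915$). These are solved in closed form in the discussion immediately preceding the proposition, so I would simply invoke those solutions rather than rederive them. Plugging in $S_0 = 915$ and $10 L_0^2 + L_0 = 1010$ and simplifying the two resulting rational coefficients yields $10 S_0/1010 = 9 + 6/101$ and $S_0/1010 = 3/2 - 60/101$, matching the claimed formula. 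Since $L_i$ grows like $201^{i+1}/20$, the values of $L$ so produced form an infinite set.

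There is really no hard step here: the proposition is a numerical instantiation of the combining machinery already developed in this section. The one thing worth double-checking is that the seed $(10,4)$-DTS listed in the Appendix truly meets \eqref{deg-5-slen-lb} with equality (any slack in $S_0$ would enlarge both coefficients), but this is immediate from reading the table and summing its four ruler lengths.
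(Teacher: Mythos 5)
Your proposal is correct and follows essentially the same route as the paper: iterate Theorem \ref{combining-construction} with the perfect $(10,4)$-DTS from the Appendix as seed, apply Proposition \ref{sum-of-lengths-under-combining} to get the recurrences, and substitute $L_0=10$, $S_0 = 9L_0^2+\tfrac{3}{2}L_0 = 915$ into the closed-form solutions to obtain the coefficients $9+\tfrac{6}{101}$ and $\tfrac{3}{2}-\tfrac{60}{101}$. The only slip is the phrase ``summing its four ruler lengths'' for the seed check: the $(10,4)$-DTS has ten rulers (their lengths $100,96,95,94,93,92,91,90,89,75$ indeed sum to $915$), so the verification itself stands.
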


\subsection{Implications for Higher-Order Staircase Codes}

We now consider some implications
of the results of this section 
for higher-order staircase codes:

\subsubsection{Recovery of Tiled Diagonal Zipper Codes}

Consider the higher-order staircase code defined by
the $(2,S/L)$-nets from 
Example \ref{net-perms} or Example \ref{net-perms-invo}
and the scope and sum-of-lengths optimal $(L,1)$-DTS
given by 
$d_1^{(\ell)} = L-\ell$ for $\ell \in [L]$.
We get that
\begin{align*}
	\{d_L,d_{L+1},\dots,d_{2L-1}\}
	&= \{L(L-\ell) + \ell \mid \ell \in [L]\}\\
	&= \{L(\ell+2) -(\ell+1) \mid \ell \in [L]\}
\end{align*}
which yields 
\begin{equation*}
	\pi'_{L} = \pi'_{L+1} = \dots = 
	\pi'_{2L-1} = \pi_1
\end{equation*}
where $\pi_1(i,j) = (j,i)$.
Our higher-order staircase code 
is then defined by the constraint
that the rows of
\begin{equation*}
	\big(
	B_{n-L^2}^\mathsf{T}
	\big\vert
	\hspace*{-0.5ex} \cdots \hspace*{-0.5ex}
	\big\vert
	B_{n-(3L-2)}^\mathsf{T} \big\vert
	B_{n-(2L-1)}^\mathsf{T} \big\vert
	B_{n-(L-1)} \big\vert
	\hspace*{-0.5ex} \cdots \hspace*{-0.5ex} \big\vert
	B_{n-1} \big\vert
	B_{n} 
	\big)
\end{equation*}
belong to some component code $\mathcal{C}$
for all $n\in L\mathbb{Z}$.
This family is precisely the 
tiled diagonal zipper
codes of \cite{sukmadji-cwit}.

\subsubsection{Memory Reduction Limits}

Tiled diagonal zipper
codes
are often colloquially 
described as providing encoding
and decoding memory reductions
by up to $1/2$ relative to classical
staircase codes \cite{smith} 
which are the $L=1$ case for
this family.
The factor of $1/2$ arises as a 
certain reasonable limit
representing the approximate
memory reduction at high code
rates. We can
compute analogous limits for 
higher-order
staircase codes. In particular, we divide the encoding
memory \eqref{encoding-memory} by the
encoding memory in the $L=1$ case
and take the limit as $S\to \infty$
with $S/L$ held constant 
so that $L\to \infty$
in each of the cases of $M\in \{1,2,3,4\}$.
We will assume for each $M\in\{1,2,3,4\}$
the existence of infinitely many values
of $L$ for which the respective sum-of-lengths
lower bounds \eqref{deg-2-slen-lb}, \eqref{deg-3-slen-lb},
\eqref{deg-4-slen-lb}, and \eqref{deg-5-slen-lb} are achieved.
The resulting limits are $1/2$, $3/4$, $5/6$, and $9/11$
respectively
for each $M\in\{1,2,3,4\}$.
Note that our premise is only conjectured to be
true in the cases of $M\in\{3,4\}$ but Propositions
\ref{infinite-deg-4-fam} and \ref{infinite-deg-5-fam}
allow us to prove limits of $5/6 + 1/546$ and 
$9/11 + 6/1111$ in those cases instead. 
We can also obtain decoding memory reduction
limits via \eqref{decoding-memory} similarly
as $1/2$, $3/4$, $6/7$, and $10/12$ respectively
for each $M\in\{1,2,3,4\}$.

\subsubsection{Recovery of Self-Orthogonal Convolutional Codes}
Consider a higher-order staircase code with $S = L$
and recall Definition \ref{trivial-net-family}
for $(M+1,1)$-nets. We see that the net permutations disappear
from the construction. Consider further taking $r = 1$, 
i.e., assuming single-parity-check constraints. 
The definition of higher-order staircase codes (Definition \ref{higher-order-staircase-code-def})
still remains
interesting in this case and leads to a 
family of rate $(L-1)/L$ codes. These 
codes are precisely a recursive rather than
feedforward version of the self-orthogonal
convolutional codes of Robinson and Bernstein \cite{CSOC}.
In particular, the codes of \cite{CSOC} 
have a slightly increased rate of $L/(L+1)$
arising from an $(L,M)$-DTS 
achieved by omitting 
previous parities in the recursive encoding 
process. The reader is referred to \cite{FF-Staircase}
for an example of how the conversion to a feedforward
variant can be done in the case of classical 
staircase codes. Such conversions typically lead 
to large error floors since parity symbols are
left unprotected.

\section{DTS Search Techniques} \label{computer-search-technique}

A highly-optimized variation on the DTS search algorithm
of Koubi et al.\ \cite{Koubi-Hill-Climbing} was implemented.
Part of this variation is based on an adaptation of a 
standard programming trick in which a subset of $[N]$ is represented
by a single $N$-bit integer. A critical, innermost loop
step of the algorithm of Koubi et al.\ \cite{Koubi-Hill-Climbing}
and any similar stochastic local search algorithm
is the insertion of a new mark into a partially 
constructed DTS. This requires that the distances
created by the new mark are compared to all of
the distances used by the partial DTS thus
far, requiring up to $\mathcal{O}(LM^3)$ 
operations. Under a C language
model \cite{C-language}, this
can be reduced to an essentially constant number
of operations as will be seen shortly.

The resulting algorithm can also be used
to make progress on the classical
minimum scope DTS problem. As a certificate of
its effectiveness, we exhibit a perfect (thus minimum scope) $(12,4)$-DTS and a minimum scope $(13,4)$-DTS
settling the minimum scope DTS problems for these cases and improving on the previous best
upper bounds of $122$ and $133$ given in \cite{Shearer-IBM-Report}:
\begin{align*}
	\begin{array}{rrrrr}
		0 & 3 & 62 & 106 & 120 \\
		0 & 11 & 66 & 86 & 119 \\
		0 & 27 & 34 & 105 & 118 \\
		0 & 18 & 56 & 99 & 116 \\
		0 & 22 & 51 & 74 & 115 \\
		0 & 10 & 42 & 77 & 114 \\
		0 & 6 & 63 & 89 & 113 \\
		0 & 2 & 47 & 87 & 112 \\
		0 & 19 & 80 & 95 & 111 \\
		0 & 21 & 70 & 100 & 109 \\
		0 & 12 & 48 & 94 & 102 \\
		0 & 28 & 96 & 97 & 101 \\
	\end{array}
	\quad
	\begin{array}{rrrrr}
		0 & 44 & 80 & 115 & 131 \\
		0 & 33 & 42 & 123 & 130 \\
		0 & 1 & 69 & 109 & 129 \\
		0 & 25 & 73 & 84 & 127 \\
		0 & 31 & 41 & 96 & 126 \\
		0 & 26 & 64 & 78 & 125 \\
		0 & 21 & 66 & 112 & 124 \\
		0 & 8 & 83 & 100 & 122 \\
		0 & 23 & 93 & 117 & 121 \\
		0 & 15 & 82 & 101 & 119 \\
		0 & 29 & 56 & 105 & 118 \\
		0 & 6 & 63 & 113 & 116 \\
		0 & 32 & 34 & 106 & 111 \\
	\end{array}
\end{align*}

\subsection{C Language Model Speedup}

We review key elements of the $C$ language model 
\cite{C-language} to which the reader is
referred for further details as needed:
\begin{itemize}
	\item An $N$-bit unsigned integer $\mathsf{n}$ is
	interpreted as an element of $\mathbb{Z}_{2^N}$ and 
	its $i$th bit is the value
	of $\mathsf{b}_i \in \{0,1\}$ where $\mathsf{n} = \sum_{i=0}^{N-1} \mathsf{b}_i2^i$.
	\item The binary operators $\BitwiseAND$, $\BitwiseOR$, $\ShiftLeft$, and $\ShiftRight$
	respectively denote bitwise logical AND, bitwise logical OR, logical left shift, and logical
	right shift.
	\item All non-zero integers are interpreted as logically true and zero is logically false. The binary operators $\LogicalAND$ and $\LogicalOR$ and the unary operator $\LogicalNOT$ respectively denote logical AND, logical OR, and logical NOT acting accordingly on integer operands.
\end{itemize}

In what follows, \emph{all variables} are $N$-bit unsigned integers
which have a dual interpretation.
When a variable is treated as a set, it
represents the subset of $[N]$ whose elements are all $i \in [N]$ such that the $i$th bit of the variable is $1$.
Otherwise, a variable simply represents its integer value.
A \emph{partially
	constructed DTS} is a set of rulers with differing
numbers of elements such that all differences between 
elements within a ruler are distinct from each other
and distinct from differences produced by other rulers.
We are given a value $\mathsf{mark}$ which we wish to
insert into a particular ruler $\mathsf{natRuler}$ 
of a partially constructed DTS, but only if it does not
lead to repeated differences. We maintain:
\begin{itemize}
	\item the set of differences used by the partially constructed DTS thus
	far $\mathsf{usedDistances}$,
	\item the largest mark of $\mathsf{natRuler}$ given by $\mathsf{largestMark}$, and
	\item a redundant mirror image of $\mathsf{natRuler}$ given by $\mathsf{revRuler}$
	where $\mathsf{revRuler}$ contains $i$ if and only if $\mathsf{natRuler}$ contains 
	$\mathsf{largestMark}-i$.
\end{itemize}
If $\mathsf{mark}$ is successfully inserted into $\mathsf{natRuler}$, these
three variables must also be updated accordingly.

\begin{algorithm}[t]
	\caption{Fast conditional mark insertion.}\label{fast-conditional-mark-insertion}
	\begin{algorithmic}[1]
		\If{$\mathsf{mark} > \mathsf{largestMark}$} 
		\State $\mathsf{leftDistances} \gets \mathsf{revRuler} \ShiftLeft (\mathsf{mark}-\mathsf{largestMark})$
		\State $\mathsf{rightDistances} \gets 0$
		\Else
		\State $\mathsf{leftDistances} \gets \mathsf{revRuler} \ShiftRight (\mathsf{largestMark}-\mathsf{mark})$
		\State $\mathsf{rightDistances} \gets \mathsf{natRuler} \ShiftRight \mathsf{mark}$
		\EndIf
		\State $\mathsf{bisection} \gets \mathsf{leftDistances} \BitwiseAND \mathsf{rightDistances}$
		\State $\mathsf{distances} \gets \mathsf{leftDistances} \BitwiseOR \mathsf{rightDistances}$
		\State $\mathsf{intersection} \gets \mathsf{usedDistances} \BitwiseAND \mathsf{distances}$
		\If{$\LogicalNOT(\mathsf{bisection} \LogicalOR \mathsf{intersection})$}
		\State $\mathsf{natRuler} \gets \mathsf{natRuler} \BitwiseOR (1\ShiftLeft \mathsf{mark})$ 
		\If{$\mathsf{mark} > \mathsf{largestMark}$}
		\State $\mathsf{revRuler}\gets \mathsf{leftDistances} \BitwiseOR 1$
		\State $\mathsf{largestMark}\gets\mathsf{mark}$
		\Else
		\State $\mathsf{revRuler} \gets \mathsf{revRuler} \BitwiseOR (1\ShiftLeft (\mathsf{largestMark}-\mathsf{mark}))$ 
		\EndIf
		\State $\mathsf{usedDistances}\gets \mathsf{usedDistances} \BitwiseOR \mathsf{distances}$
		\EndIf
	\end{algorithmic}
\end{algorithm}

We accomplish all of this with Algorithm \ref{fast-conditional-mark-insertion}.
Here, $\mathsf{leftDistances}$ is assigned the distances between
$\mathsf{mark}$ and all existing marks that are less than 
or equal to $\mathsf{mark}$. On the other hand, 
$\mathsf{rightDistances}$ is assigned the distances between
$\mathsf{mark}$ and all existing marks that are greater than 
or equal to $\mathsf{mark}$. It can be shown that 
these two sets intersect
if and only if the inserted $\mathsf{mark}$ is equal to an 
existing mark or bisects a pair of existing marks which leads
to a repeated distance. Otherwise, the only possible repeated
distances are common distances between $\mathsf{distances}$ and $\mathsf{usedDistances}$.

Finally, we must contend with the issue that in practice,
we have $N=64$ in accordance 
with the $64$-bit data widths of modern processors. 
This apparently limits us
to searching for DTSs with scope at most $64$. However, we 
can emulate a larger $N$ by implementing bitwise and shifting operations 
sequentially on $64$-bit chunks. This leads to increased
complexity relative to what is presented
as Algorithm \ref{fast-conditional-mark-insertion}, 
especially for the shifting operations which require
carryovers. This also means that
the number of operations is not strictly 
constant. However, there nonetheless remains 
a significant overall speedup relative to an 
array-based implementation performing mark-by-mark 
comparisons. 
An emulated $256$-bit implementation of
Algorithm \ref{fast-conditional-mark-insertion}
can be found within our
DTS search program sample implementation,
available online \cite{mohannad-dts-search-github}.

\subsection{Mark Statistics}

The algorithm of Koubi et al.\ \cite{Koubi-Hill-Climbing}
involves a simple stochastic local search procedure in
which a DTS is populated with randomly generated marks 
which are inserted if compatible as just described.
Backtracking steps such as deleting an entire ruler 
or a mark within a ruler are used when no compatible
marks are found after a specified number of attempts.
These are accommodated by adding appropriate
record-keeping procedures to Algorithm \ref{fast-conditional-mark-insertion}.
However, the key novel aspect of the algorithm
of Koubi et al.\ \cite{Koubi-Hill-Climbing}
is the modelling of the marks of a ruler in a DTS 
as arising from a simple Gaussian mixture: 
For each $i\in [M+1]$, 
the $i$th mark of any ruler 
in the DTS is sampled independently 
from a Gaussian distribution 
with mean $\mu_i$ and variance $\sigma^2_i$ (with rounding
to the nearest integer).
These parameters can be inferred in various 
ways described in \cite{Koubi-Hill-Climbing}
and a followup work \cite{Koubi-Followup}.

We employ the following procedure 
for obtaining the mark distribution parameters
$\mu_i$ and $\sigma^2_i$: 
Suppose that we seek DTSs with
scope at most $T$. We run a version 
of our algorithm with a 
uniform mark distribution to find some number of DTSs
with a larger admissible scope $\widetilde T$ with $\widetilde T > T$. 
If the resulting sample means and variances are
$\widetilde\mu_i$ and $\widetilde\sigma^2_i$,
we take $\mu_i = (T/\widetilde{T})\widetilde{\mu}_i$
and $\sigma_i^2 = (T/\widetilde{T})^2\widetilde{\sigma}_i^2$.
If we seek DTSs which further 
minimize sum-of-lengths, we 
add a third essential step:
We generate a number of DTSs with
scope at most $T$ and take $\mu_i$
and $\sigma^2_i$ to be
the means and variances of a biased
sample consisting only of those
DTSs whose sum-of-lengths is below
a specified threshold, e.g., below average.
This results in an algorithm which is significantly
more likely to produce a minimum sum-of-lengths
DTS.

Aside from the key elements just described, 
there are some other minor differences
from the algorithm of \cite{Koubi-Hill-Climbing} 
which we will not bother describing;
instead, we provide a sample implementation 
online \cite{mohannad-dts-search-github}.

\section{Future Work}\label{Conclusion}

Several directions for future work are immediately
apparent. On the application side, 
while we have provided an essentially
five-parameter 
($L$, $M$, $S/L$, $C$, and $t$) family of codes, we
have only evaluated performance for a few parameters 
with $t = 1$ and with hard-decision decoding.
Many other possibilities
can be considered in future work such as 
$t > 1$, soft-decision decoding, and any techniques
that have been applied to product-like codes.
Such investigations can be initiated by
using and extending our simulators available online \cite{decsim-code,hosc-software}
but can also be aided by the development of threshold 
and error floor analysis techniques along the lines of 
\cite{sukmadji,Lei,hbd-analysis}. 
Furthermore, we may also consider 
the development of non-scattering and irregular-degree
variants as done in \cite{Huawei-Higher-Degree} and 
\cite{sukmadji-irregular} respectively.

On the mathematical side, it may be within reach 
to construct an 
infinite family of DTSs achieving the bounds
\eqref{deg-4-slen-lb} or \eqref{deg-5-slen-lb}
or to strengthen Propositions \ref{infinite-deg-4-fam}
and \ref{infinite-deg-5-fam} by considering the many other
construction methods for perfect DTSs provided in 
\cite{deg-4-1000}. It may also be within reach
to find an interesting general sum-of-lengths lower bound
for arbitrary $M$. Lastly, the search for 
sum-of-lengths-optimal and scope-optimal DTSs can be continued.

{\appendix[Explicit Difference Triangle Sets with Minimum Scope and Sum-of-Lengths]}

\subsection{$(L \geq 1, M=2)$}
For $1\leq L \leq 7$, we have:
\begin{align*}
	\begin{array}{rrr}
		0 & 1 & 3 \\
	\end{array}
	\quad
	\begin{array}{rrr}
		0 & 2 & 7 \\
		0 & 3 & 4 \\
	\end{array}
	\quad
	\begin{array}{rrr}
		0 & 3 & 10 \\
		0 & 6 & 8 \\
		0 & 4 & 5 \\
	\end{array}
	\quad
	\begin{array}{rrr}
		0 & 11 & 12 \\
		0 & 4 & 10 \\
		0 & 2 & 9 \\
		0 & 3 & 8 \\
	\end{array}\\[5pt]
	\begin{array}{rrr}
		0 & 13 & 15 \\
		0 & 6 & 14 \\
		0 & 11 & 12 \\
		0 & 3 & 10 \\
		0 & 5 & 9 \\
	\end{array}
	\quad
	\begin{array}{rrr}
		0 & 3 & 19 \\
		0 & 12 & 17 \\
		0 & 6 & 15 \\
		0 & 1 & 14 \\
		0 & 7 & 11 \\
		0 & 2 & 10 \\
	\end{array}
	\quad 
	\begin{array}{rrr}
		0 & 2 & 22 \\
		0 & 1 & 19 \\
		0 & 6 & 17 \\
		0 & 9 & 16 \\
		0 & 12 & 15 \\
		0 & 4 & 14 \\
		0 & 5 & 13 \\
	\end{array}
\end{align*}

For all $L \geq 8$, we have: 
\vspace{1pt}
\subsubsection{$L=4m$ with $m > 1$ due to Skolem \cite{Skolem}}
\vspace{-1pt}
\begin{align*}
	&(0,4m-1,10m)\\
	&(0,2m-1,8m-1)\\
	&(0,1,5m+1)\\
	&(0,4m-2i,12m-i) \text{ for } i\in\{0,1,\dots,2m-1\}\\
	&(0,4m-1-2i,8m-1-i) \text{ for } i\in\{1,2,\dots,m-1\}\\
	&(0,2m-3-2i,7m-1-i) \text{ for } i\in\{0,1,\dots,m-3\}
\end{align*}
\subsubsection{$L=4m+1$ with $m > 1$ due to Skolem \cite{Skolem}}
\vspace{-1pt}
\begin{align*}
	&(0,4m+1,10m+3)\\
	&(0,2m-1,8m+2)\\
	&(0,1,5m+3)\\
	&(0,4m-2i,12m+3-i) \text{ for } i\in\{0,1,\dots,2m-1\}\\
	&(0,4m+1-2i,8m+2-i) \text{ for } i\in\{1,2,\dots,m\}\\
	&(0,2m-1-2i,7m+2-i) \text{ for } i\in\{1,2,\dots,m-2\}
\end{align*}
\subsubsection{$L=4m+2$ with $m > 1$ due to O'Keefe \cite{OKeefe}}
\vspace{-1pt}
\begin{align*}
	&(0,4m+1,10m+4)\\
	&(0,2m+1,10m+5)\\
	&(0,4m+2,12m+7)\\
	&(0,1,11m+6)\\
	&(0,4m+2-2i,8m+4-i) \text{ for } i\in\{1,2,\dots,2m\}\\
	&(0,4m+1-2i,12m+6-i) \text{ for } i\in\{1,2,\dots,m-1\}\\
	&(0,2m+1-2i,11m+5-i) \text{ for } i\in\{1,2,\dots,m-1\}
\end{align*}
\subsubsection{$L=4m+3$ with $m > 1$ due to O'Keefe \cite{OKeefe}}
\vspace{-1pt}
\begin{align*}
	&(0,2m+3,7m+6)\\
	&(0,1,5m+5)\\
	&(0,2m+1,8m+6)\\
	&(0,4m+2,10m+8)\\
	&(0,4m+3,12m+10)\\
	&(0,4m+2-2i,12m+9-i) \text{ for } i\in\{1,2,\dots,2m\}\\
	&(0,4m+3-2i,8m+6-i) \text{ for } i\in\{1,2,\dots,m-1\}\\
	&(0,2m+1-2i,7m+6-i) \text{ for } i\in\{1,2,\dots,m-1\}
\end{align*}

\subsection{$(1 \leq L \leq 15, M=3)$}

These are found with the methods described in Section \ref{computer-search-technique}.

\begin{align*}
	\begin{array}{rrrr}
		0 & 1 & 4 & 6 \\
	\end{array}
	\quad
	\begin{array}{rrrr}
		0 & 3 & 12 & 13 \\
		0 & 5 & 7 & 11 \\
	\end{array}
	\quad
	\begin{array}{rrrr}
		0 & 3 & 15 & 19 \\
		0 & 1 & 10 & 18 \\
		0 & 2 & 7 & 13 \\
	\end{array}\\[5pt]
	\begin{array}{rrrr}
		0 & 6 & 14 & 24 \\
		0 & 3 & 22 & 23 \\
		0 & 9 & 16 & 21 \\
		0 & 4 & 15 & 17 \\
	\end{array}
	\quad
	\begin{array}{rrrr}
		0 & 8 & 21 & 30 \\
		0 & 14 & 26 & 29 \\
		0 & 4 & 23 & 28 \\
		0 & 7 & 25 & 27 \\
		0 & 1 & 11 & 17 \\
	\end{array}
	\quad
	\begin{array}{rrrr}
		0 & 10 & 32 & 36 \\
		0 & 8 & 24 & 35 \\
		0 & 5 & 33 & 34 \\
		0 & 12 & 25 & 31 \\
		0 & 7 & 21 & 30 \\
		0 & 2 & 17 & 20 \\
	\end{array}\\[5pt]
	\begin{array}{rrrr}
		0 & 14 & 30 & 42 \\
		0 & 1 & 38 & 41 \\
		0 & 8 & 32 & 39 \\
		0 & 10 & 27 & 36 \\
		0 & 13 & 33 & 35 \\
		0 & 5 & 23 & 34 \\
		0 & 4 & 19 & 25 \\
	\end{array}
	\quad
	\begin{array}{rrrr}
		0 & 2 & 43 & 48 \\
		0 & 13 & 36 & 47 \\
		0 & 15 & 33 & 45 \\
		0 & 6 & 28 & 44 \\
		0 & 7 & 39 & 42 \\
		0 & 9 & 26 & 40 \\
		0 & 8 & 27 & 37 \\
		0 & 4 & 24 & 25 \\
	\end{array}
	\quad
	\begin{array}{rrrr}
		0 & 13 & 53 & 54 \\
		0 & 4 & 49 & 52 \\
		0 & 12 & 34 & 51 \\
		0 & 14 & 43 & 50 \\
		0 & 15 & 38 & 47 \\
		0 & 16 & 35 & 46 \\
		0 & 2 & 26 & 44 \\
		0 & 6 & 27 & 37 \\
		0 & 5 & 25 & 33 \\
	\end{array}\\[5pt]
	\begin{array}{rrrr}
		0 & 19 & 42 & 60 \\
		0 & 15 & 43 & 59 \\
		0 & 10 & 57 & 58 \\
		0 & 17 & 49 & 56 \\
		0 & 3 & 53 & 55 \\
		0 & 20 & 46 & 54 \\
		0 & 13 & 40 & 51 \\
		0 & 9 & 31 & 45 \\
		0 & 12 & 33 & 37 \\
		0 & 5 & 29 & 35 \\
	\end{array}
	\quad
	\begin{array}{rrrr}
		0 & 10 & 53 & 66 \\
		0 & 20 & 60 & 65 \\
		0 & 2 & 63 & 64 \\
		0 & 12 & 41 & 59 \\
		0 & 16 & 44 & 58 \\
		0 & 19 & 49 & 57 \\
		0 & 7 & 33 & 55 \\
		0 & 17 & 51 & 54 \\
		0 & 21 & 46 & 52 \\
		0 & 11 & 35 & 50 \\
		0 & 9 & 32 & 36 \\
	\end{array}
	\quad
	\begin{array}{rrrr}
		0 & 14 & 59 & 72 \\
		0 & 16 & 49 & 71 \\
		0 & 5 & 67 & 70 \\
		0 & 21 & 52 & 69 \\
		0 & 24 & 53 & 68 \\
		0 & 9 & 60 & 66 \\
		0 & 23 & 63 & 64 \\
		0 & 11 & 43 & 61 \\
		0 & 20 & 46 & 56 \\
		0 & 7 & 35 & 54 \\
		0 & 4 & 34 & 42 \\
		0 & 12 & 37 & 39 \\
	\end{array}\\[5pt]
	\begin{array}{rrrr}
		0 & 17 & 66 & 78 \\
		0 & 22 & 62 & 77 \\
		0 & 2 & 73 & 76 \\
		0 & 10 & 68 & 75 \\
		0 & 20 & 47 & 72 \\
		0 & 24 & 54 & 70 \\
		0 & 21 & 56 & 69 \\
		0 & 23 & 59 & 67 \\
		0 & 19 & 60 & 64 \\
		0 & 26 & 57 & 63 \\
		0 & 11 & 39 & 53 \\
		0 & 18 & 50 & 51 \\
		0 & 9 & 38 & 43 \\
	\end{array}
	\quad
	\begin{array}{rrrr}
		0 & 6 & 77 & 84 \\
		0 & 18 & 57 & 83 \\
		0 & 8 & 80 & 82 \\
		0 & 23 & 53 & 81 \\
		0 & 12 & 55 & 79 \\
		0 & 3 & 63 & 76 \\
		0 & 27 & 64 & 75 \\
		0 & 14 & 45 & 70 \\
		0 & 22 & 54 & 69 \\
		0 & 19 & 52 & 68 \\
		0 & 20 & 62 & 66 \\
		0 & 17 & 51 & 61 \\
		0 & 21 & 50 & 59 \\
		0 & 5 & 40 & 41 \\
	\end{array}
	\quad
	\begin{array}{rrrr}
		0 & 5 & 76 & 90 \\
		0 & 1 & 82 & 89 \\
		0 & 21 & 58 & 87 \\
		0 & 19 & 59 & 86 \\
		0 & 4 & 64 & 84 \\
		0 & 28 & 72 & 83 \\
		0 & 18 & 63 & 79 \\
		0 & 24 & 70 & 78 \\
		0 & 15 & 51 & 77 \\
		0 & 23 & 73 & 75 \\
		0 & 25 & 57 & 74 \\
		0 & 13 & 47 & 69 \\
		0 & 30 & 65 & 68 \\
		0 & 12 & 43 & 53 \\
		0 & 6 & 39 & 48 \\
	\end{array}
\end{align*}

\subsection{$(1 \leq L \leq 10, L\neq 9, M=4)$}

The cases of $L \in \{6,8\}$ are due to
Laufer \cite{Laufer,Shearer-IBM-Optimal-List,Shearer-IBM-Optimal-List-Credits}, 
the case of $L = 10$
is due to Khansefid et al.\ \cite{KTG,Shearer-IBM-Optimal-List,Shearer-IBM-Optimal-List-Credits}, and 
the cases of 
$L \in \{5,7\}$ were found with Shearer's 
programs \cite{Shearer-IBM-Programs,Shearer-IBM-Programs-Paper}.

\begin{align*}
	\begin{array}{rrrrr}
		0 & 1 & 4 & 9 & 11 \\
	\end{array}
	\quad
	\begin{array}{rrrrr}
		0 & 2 & 9 & 21 & 22 \\
		0 & 4 & 10 & 15 & 18 \\
	\end{array}\\[5pt]
	\begin{array}{rrrrr}
		0 & 2 & 10 & 19 & 32 \\
		0 & 3 & 15 & 26 & 31 \\
		0 & 1 & 7 & 21 & 25 \\
	\end{array}
	\quad
	\begin{array}{rrrrr}
		0 & 4 & 16 & 34 & 41 \\
		0 & 13 & 23 & 32 & 40 \\
		0 & 3 & 24 & 38 & 39 \\
		0 & 5 & 11 & 31 & 33 \\
	\end{array}\\[5pt]
	\begin{array}{rrrrr}
		0 & 5 & 19 & 40 & 42 \\
		0 & 7 & 15 & 33 & 39 \\
		0 & 9 & 22 & 34 & 38 \\
		0 & 1 & 11 & 28 & 31 \\
	\end{array}
	\quad
	\begin{array}{rrrrr}
		0 & 6 & 20 & 48 & 51 \\
		0 & 9 & 21 & 46 & 50 \\
		0 & 13 & 23 & 47 & 49 \\
		0 & 5 & 16 & 35 & 43 \\
		0 & 1 & 18 & 33 & 40 \\
	\end{array}\\[5pt]
	\begin{array}{rrrrr}
		0 & 14 & 26 & 51 & 60 \\
		0 & 4 & 28 & 44 & 59 \\
		0 & 10 & 23 & 52 & 58 \\
		0 & 1 & 21 & 54 & 57 \\
		0 & 7 & 18 & 45 & 50 \\
		0 & 2 & 19 & 41 & 49 \\
	\end{array}
	\quad
	\begin{array}{rrrrr}
		0 & 8 & 28 & 67 & 71 \\
		0 & 10 & 33 & 57 & 70 \\
		0 & 5 & 34 & 55 & 69 \\
		0 & 12 & 27 & 65 & 68 \\
		0 & 1 & 26 & 45 & 62 \\
		0 & 7 & 18 & 49 & 58 \\
		0 & 6 & 22 & 52 & 54 \\
	\end{array}\\[5pt]
	\begin{array}{rrrrr}
		0 & 19 & 34 & 73 & 80 \\
		0 & 8 & 35 & 63 & 79 \\
		0 & 12 & 33 & 74 & 78 \\
		0 & 13 & 30 & 72 & 77 \\
		0 & 11 & 36 & 67 & 76 \\
		0 & 18 & 32 & 69 & 75 \\
		0 & 2 & 22 & 60 & 70 \\
		0 & 1 & 24 & 50 & 53 \\
	\end{array}
	\quad
	\begin{array}{rrrrr}
		0 & 1 & 45 & 98 & 100 \\
		0 & 9 & 36 & 77 & 96 \\
		0 & 14 & 37 & 88 & 95 \\
		0 & 10 & 35 & 83 & 94 \\
		0 & 15 & 46 & 76 & 93 \\
		0 & 12 & 40 & 79 & 92 \\
		0 & 22 & 42 & 85 & 91 \\
		0 & 8 & 34 & 72 & 90 \\
		0 & 3 & 32 & 65 & 89 \\
		0 & 5 & 21 & 71 & 75 \\
	\end{array}
\end{align*}


\end{document}